%%%%%%%%%%%%%%%%%%%%%%%%%%%%%%%%%%%%%%%%%%%%%%%%%%%%%%%%%%%%%%%%%%%%%%%%%%%%%%%
\documentclass[12pt,reqno,fleqn]{amsart}
\usepackage{latexsym}
\usepackage{amssymb}
\usepackage{amsxtra}
\usepackage{amsmath}
\usepackage{amsfonts}
\usepackage[dvips]{graphicx}
\usepackage{pstricks,pst-node}
\usepackage[metapost]{mfpic}
\usepackage{mathrsfs}
\usepackage{enumerate}
\textwidth14cm \hoffset-2cm \textwidth 17truecm
%%%%%%%%%%%%Equation counting %%%%%%%%%%%%%%%%%

%%%%%%%%%%%%%%%%%%%%%%%%%%%%%%%%%%%%%%%%%%%%%%%%%%%%%%%%%%%%%%5
\newtheorem{theorem}{Theorem}[section]
\newtheorem{lemma}[theorem]{Lemma}

%\newtheorem{proof}[theorem]{Proof}
%%%%%%%%%%%%%%%%%

\makeatother       % '@' is restored as a "non-letter" character for TeX

\makeatletter      % '@' is now a normail "letter" for TeX
\@addtoreset{equation}{section}
\makeatother       % '@' is restored as a "non-letter" character for TeX
\begin{document}
\title[]{The compatibility of additional symmetry and  gauge transformations  for the constrained discrete Kadomtsev-Petviashvili hierarchy}
\author{Maohua Li$^{1,3}$,Jipeng Cheng$^{2}$,Jingsong He$^{1,*}$}
\dedicatory {
1. Department of Mathematics, Ningbo University, Ningbo, 315211 Zhejiang, China\\
2. Department of Mathematics, China University of Mining and Technology, Xuzhou, 221116 Jiangsu, China\\
3. Institute for Theoretical Physics, KU Leuven, 3001 Leuven, Belgium\\
limaohua@nbu.edu.cn\\
chengjp@cumt.edu.cn\\
hejingsong@nbu.edu.cn
}
\thanks{$^*$ Corresponding author.}
\begin{abstract}
In this paper, the compatibility between the  gauge
transformations and the additional symmetry of the constrained discrete Kadomtsev-Petviashvili
hierarchy is given, which preserving the form of the additional symmetry of the cdKP hierarchy, up to shifting of
  the corresponding additional flows by ordinary time flows.  \\
\textbf{Keywords}:  constrained discrete KP
hierarchy, gauge
transformation, additional symmetry. \\
\textbf{PACS}: 02.30.Ik\\
\textbf{2010 MSC}: 35Q53, 37K10, 37K40
\end{abstract}
\maketitle
%%%%%%%%%%%%%%%%%%%%%%%%%%%%%%%%%%%%%%%%%%%%%%%%%%%%%%
\section{Introduction}
%%%%%%%%%%%%%%%%%%%%%%%%%%%%%%%%%%%%%%%%%%%%%%%%%%%%%%%

The discrete Kadomtsev-Petviashvili (dKP) hierarchy
\cite{Kupershimidt,Kanaga1997,Iliev,LiuS2,zhangdj2}  is  an attractive
research object in the field of the discrete integrable systems.
The dKP hierarchy is defined by means of the difference derivative
$\Delta$ instead of the usual derivative $\partial$ with respect
of $x$ in a classical system \cite{dickey2003,dkjm}, and the continuous
spatial variable is replaced by a discrete variable $n$.
By using a non-uniform shift of space variable, the $\tau$-function of KP hierarchy implies a special kind of $\tau$-function  for the
dKP hierarchy \cite{Iliev}.
With the  symmetry constraint or symmetry reduction technique, which was used in the  continuous KP hierarchy \cite{chengli,kss1991,Cheng92}, the constrained discrete KP (cdKP) hierarchy is truncated dKP hierarchy by adding a constrained operator form (see (\ref{cdkplax})) on the Lax operator $L$ of the dKP hierarchy \cite{lmh20131}.  And  the discrete nonlinear Schr\"odinger equation and other equations can be derived from it.

The gauge transformation is one kind of powerful method to construct the
solutions of the integrable systems for  the continuous KP
hierarchy \cite{chau_cmp1992,oevel1993,oevelRMP93,nimmo,chau1997,hlc2002}, the dKP hierarchy \cite{lmh20131,oevel1996,LiuS} and the cdKP hierarchy \cite{lmh20131}, which in fact reflects the
intrinsic integrability of the KP hierarchy and dKP hierarchy. Chau {\it et al}
\cite{chau_cmp1992} introduce two kinds of elementary gauge
transformation operators: the differential type $T_D$ and the
integral type $T_I$. By now, the gauge transformations of many
integrable hierarchies related to KP hierarchy have been derived,
for example, the constrained KP (cKP)
hierarchy\cite{chengli,kss1991,Cheng92,oevel1993,chau1997,willox,he2003}, the
constrained BKP and CKP hierarchy\cite{nimmo,he2007} (cBKP and cCKP), the
dKP hierarchy\cite{LiuS, oevel1996}, the cdKP hierarchy \cite{lmh20131}, the $q$-KP
hierarchy\cite{tu1998,he2006} and so on. The additional symmetry
\cite{fokas1981,oevel,bfuch1983,Chen1983,OS86,ASM95,D95,Tu07,he2007b,cheng2011,tu2011,tian2011,2li2012}
is a kind of symmetry depending explicitly on the space and time
variables, involved in so-called string equation and the generalized
Virasoro constraints in matrix models of the 2d quantum gravity (see
\cite{dickey2003,vM94} and references therein).
Regarding the possible application of the additional symmetry  flows of the KP hierarchy in physics, it  is natural to ask whether these flows are
compatible with  the gauge transformation. It is a highly non-trivial question because the gauge transformation is only defined to be consistent with
ordinary KP flows. For example, Ref.\cite{aratyn97} has shown  the compatibility between the the  differential type of  gauge transformation and
the additional symmetry flow of cKP hierarchy separately, up to a shift of ordinary flow of cKP hierarchy.  In order to construct the additional
symmetry flows of the cKP hierarchy  from the corresponding flows of the KP hierarchy, it is necessary to do a remarkable amendment \cite{aratyn97} in its definition.  So it is an interesting problem to show the
compatibility between the gauge transformations and the additional
symmetry of the cdKP hierarchy.  The additional symmetry flows for the cdKP hierarchy are  constructed in
\cite{lmh20134,clh2013} through  a subtle  modification of
the standard additional symmetry flows by adding a complicated term,  which form a Virasoro type algebraic structure \cite{lmh20134}. And the action of the  Virasoro symmetry on the tau function of the cdKP hierarchy is also derived \cite{clh2013}.

 In this paper, it is showed that the additional symmetry
  flows for the cdKP hierarchy commute with the integral type and difference type gauge
  transformations preserving the form of the additional symmetry of the cdKP hierarchy, up to shifting of
  the corresponding additional flows by ordinary time flows,
  which reflects  the compatibility between the two types  of the gauge transformations and the additional symmetries of the cdKP hierarchy.

This paper is organized as the follows. Some backgrounds on the dKP hierarchy are recalled in Section 2. Then the two types
gauge transformation operators  of the cdKP hierarchy are reviewed in Section 3. And the additional symmetry for the cdKP
hierarchy are reviewed in  Section 4. In
Section 5, it is derived that  the additional symmetry commute with the gauge
transformations preserving the form of the additional symmetry of  the cdKP hierarchy.

%%%%%%%%%%%%%%%%%%%%%%%%%%%%%%%%%%%%%%%%%%%%%%%%%%%%%%
\section{Background on the dKP Hierarchy}
%%%%%%%%%%%%%%%%%%%%%%%%%%%%%%%%%%%%%%%%%%%%%%%%%%%%%%%
Let us briefly recall some basic facts
about the dKP hierarchy according to reference \cite{Iliev}.
Firstly
a space $F$, namely
\begin{equation}
F=\left\{ f(n)=f(n,t_1,t_2,\cdots,t_j,\cdots);  n\in\mathbb{Z}, t_i\in\mathbb{R}
\right\}
\end{equation}
is defined for the space of the dKP hierarchy.
$\Lambda$ and $\Delta$ are denote for the shift operator and the
difference operator, respectively. Their actions on function $f(n)$ are
defined as
\begin{equation}\Lambda f(n)=f(n+1)
\end{equation}
and
\begin{equation}\Delta f(n)=f(n+1)-f(n)=(\Lambda -I)f(n)
\end{equation} respectively,
where $I$ is the identity operator.

For any $j\in\mathbb{Z},$ the Leibniz  rule of $\Delta$ operation is,

\begin{equation}\Delta^jf=\sum^{\infty}_{i=0}\binom{j}{i}(\Delta^i
f)(n+j-i)\Delta^{j-i},\hspace{.3cm}
\binom{j}{i}=\frac{j(j-1)\cdots(j-i+1)}{i!}.\label{81}
\end{equation}
So an associative ring $F(\Delta)$ of formal pseudo
difference operators (PDO) is obtained,  namely
$F(\Delta)=\left\{R=\sum_{j=-\infty}^d f_j(n)\Delta^j,
f_j(n)\in R, n\in\mathbb{Z}\right\}
$. The
adjoint operator to the $\Delta$ operator is given by
$\Delta^*$,
\begin{equation}
\Delta^* f(n)=(\Lambda^{-1}-I)f(n)=f(n-1)-f(n),
\end{equation}
where $\Lambda^{-1} f(n)=f(n-1)$, and the corresponding $j$-times
operation is
\begin{equation}
\Delta^{*j}
f=\sum^{\infty}_{i=0}\binom{j}{i}(\Delta^{*i}f)(n+i-j)\Delta^{*j-i}.
\end{equation}
Then the adjoint ring $F(\Delta^*)$ to the
$F(\Delta)$ is obtained, and the formal adjoint to $R\in F(\Delta)$ is defined
by $R^*\in F(\Delta^*)$ as $R^*=\sum_{j=-\infty}^d
\Delta^{*j} f_j(n)$. The $"*"$  stands for the conjugate operation which  satisfies the rules as
$(F G)^*=G^* F^*$, $\Delta^*=-\Delta$, $f^*=f$ for two operators $F$ and $G$ and $f(n)^*=f(n)$ for
a function $f(n)$. Here for any (pseudo-) difference operator $A$ and a function $f$,
the symbol $A(f)$ will indicate the action of $A$ on $f$, whereas
the symbol $Af$ (or $A\cdot f$) will denote just operator product of $A$ and $f$.

         The dKP hierarchy \cite{Kanaga1997,Iliev} is a family of evolution equations depending on
infinitely many variables $t=(t_1,t_2,\cdots)$
\begin{equation} \label{floweq}
\frac{\partial L}{\partial t_k}=[B_k, L],\ \ \ B_k:=(L^k)_+,
\end{equation}
where $L$ is a general first-order PDO
\begin{equation} \label{laxoperatordkp}
L(n)=\Delta + \sum_{j=1}^{\infty} f_j(n)\Delta^{-j}.
\end{equation}
 $B_m=(L^m)_+=\sum^m_{j=0}a_j(n)\Delta^j$, i. e.  $(L^m)_+$ is the
non-negative projection of $L^m$, and $(L^m)_-=L^m-(L^m)_+$ is the
negative projection of $L^m$. The Lax operator in
eq.(\ref{laxoperatordkp}) can be generated by a dressing operator
\begin{equation}
W(n;t)=1+\sum^\infty_{j=1}w_j(n;t)\Delta^{-j}.
\end{equation}
through
\begin{equation}
L=W  \Delta  W^{-1}.
\end{equation}
Further the flow equation
(\ref{floweq}) is equivalent to the so-called Sato equation,
\begin{equation}\label{tkaction}
\frac{\partial W}{\partial t_k}=-(L^{k})_- W.
\end{equation}

If the functions $q(t)$ and $r(t)$ satisfy
\begin{equation}\label{eigenfunction}
\frac{\partial q}{\partial t_k}=B_k(q),\quad\quad \frac{\partial r}{\partial
t_k}=-B_k^*(r),
\end{equation}
then we call them the eigenfunction and the adjoint eigenfunction
respectively.

The cdKP hierarchy \cite{lmh20134} is defined by restricting the Lax
operator of the  dKP hierarchy
\begin{equation} \label{cdkplax}
\frac{\partial L}{\partial t_k}=[B_k, L],\ \ \ B_k:=(L^k)_+,
\end{equation}
 with the following
$l$-constrained form:
\begin{equation}\label{cdkplax}
    L^l=L^l_++\sum_{i=1}^mq_i\Delta^{-1}r_i=\Delta^l+\sum_{j=0}^{k-2}v_j\Delta^j+\sum_{i=1}^mq_i\Delta^{-1}r_i,
\end{equation}
where $q_i$ and $r_i$ are the eigenfunction and adjoint
eigenfunction respectively.

%%%%%%%%%%%%%%%%%%%%%%%%%%%%%%%%%%%%%%%%%%%%%%%%%%%%%%
\section{The two types gauge transformations of the cdKP hierarchy}
%%%%%%%%%%%%%%%%%%%%%%%%%%%%%%%%%%%%%%%%%%%%%%%%%%%%%%

Let $L$ be the original Lax operator of the cdKP hierarchy (\ref{cdkplax}),
and $T$ be a pseduo-difference operator. If the transformation
\begin{equation}\label{gaugegeneral}
    \widetilde{L}=T L T^{-1}
\end{equation}
such that
\begin{equation}\label{tranlaxeq}
 \frac{\partial \widetilde{L}}{\partial t_k}=[\widetilde{B}_k,\widetilde{L}],\quad \widetilde{B}_k=(\widetilde{L}^k)_+,\ k=1,2,3,\cdots
\end{equation}
still holds for transformed Lax operator $\widetilde{L}$, then  $T$ is
called the gauge transformation operator of the cdKP hierarchy.

Similar to the KP hierarchy \cite{chau_cmp1992}, there  are two types of gauge transformation
operators of the dKP hierarchy  as \cite{oevel1996,LiuS}
\begin{eqnarray}
{\rm Type\ I:}&& T_D(q)=\Lambda(q)\Delta q^{-1},\\
{\rm Type\ II:}&& T_I(r)=\Lambda^{-1}(r^{-1})\Delta^{-1} r,
\end{eqnarray}
where $q$ and $r$ are the eigenfunction and adjoint
eigenfunction respectively. The type I transformation is called the
difference type, while the type II is called the integral type.

Here we  review
some results about the integral type and the difference type gauge transformations of the cdKP hierarchy \cite{lmh20131}.  Under the integral type
gauge transformation $T_I(r)$, the transformed Lax operator will
be:
\begin{eqnarray}
  \widetilde{L}&=&T_I(r) L T_I(r)^{-1}=\widetilde{L}_++\widetilde{L}_-,\label{ckplax1}\\
  \widetilde{L}_+&=&\Lambda^{-1}(L_+)-\Lambda^{-1}(r^{-1})\Delta^{-1}(\Delta^*(r^{-1}L_+^{*}r)_{\geq 1})^*(r),\label{ckplax1add}\\
  \widetilde{L}_-&=&\widetilde{q}_0\Delta^{-1}\widetilde{r}_0+\sum_{i=1}^m\widetilde{q}_i\Delta^{-1}\widetilde{r}_i,\label{ckplax1minus}\\
  \widetilde{q}_0&=&\Lambda^{-1}(r^{-1}),\quad \widetilde{r}_0=T_I(r)^{*-1}L^{(0)*}(r),\label{qr01}\\
  \widetilde{q}_i&=&T_I(r)(q_i),\quad
  \widetilde{r}_i=T_I(r)^{*-1}(r_i)\label{qri1}.
\end{eqnarray}
In order to preserve the form (\ref{cdkplax}) of the Lax operator
$L$, $r$ is required to coincide with one of the original
adjoint eigenfunctions of $L$, e.g. $r=r_1$, since
$\widetilde{r}_1=0$ in this case.

Under the gauge transformation of $T_D(q)$, the transformed Lax operator reads as
\begin{eqnarray}
\widetilde{L}&=& \widetilde{L}_++\widetilde{L}_-,\\
\widetilde{L}_+&=&\Lambda(L_+)+\Lambda(q)\Delta(q^{-1}L_+q)_{\geq1} \Delta^{-1}\Lambda(q^{-1}),\\
\widetilde{L}_-&=&\widetilde{q}_0\Delta^{-1}\widetilde{r}_0+\sum_{i=1}^{m}\widetilde{q}_i\Delta^{-1}\widetilde{r}_i,\\
\widetilde{q}_0&=&T_D(q)(L)(q),\ \ \widetilde{r}_0=\Lambda(q^{-1}),\\
\widetilde{q}_i&=&T_D(q)q_i,\ \ \widetilde{r}_i=(T_D^{-1})^{*}(q)(r_i).
\end{eqnarray}
For the difference type gauge transformation $T_D(q)$, in order to preserve the form (\ref{cdkplax}) of the Lax operator
$L$, $q$ is required to coincide with one of the original
adjoint eigenfunctions of $L$, e.g. $q=q_1$, since
$\widetilde{q}_1=0$ in this case.

In order to calculate the transformed formula of the part as $f\Delta^{-1}g$ in the Lax operator under the integral type gauge transformation,
the following lemma is necessary.
\begin{lemma}\label{gaugelemma}
\begin{eqnarray}
T_I(r_a)\cdot M\Delta^{-1}r_a\cdot T_I^{-1}(r_a)&=&\Lambda^{-1}{(r_a^{-1})}\Delta^{-1}\left\{T_I^{*-1}(r_a)(M\Delta^{-1}r_a)^*(r_a)\right\},\label{tiMr}\\
T_I(r_a)\cdot q_a\Delta^{-1}N \cdot T_I(r_a)^{-1}&=&\Lambda^{-1}{(r_a^{-1})}r_a^{-1}\Delta^{-1}\left\{T_I^{*-1}(r_a)(q_a\Delta^{-1}N)^*(r_a)\right\}\nonumber\\
&&+\widetilde{L}(\widetilde{q}_a)\Delta^{-1}\widetilde{N},
\label{tiqN}\\
T_I(r_a)\cdot M\Delta^{-1}N \cdot  T_I(r_a)^{-1}&=&\Lambda^{-1}{(r_a^{-1})}\Delta^{-1}\left\{T_I^{*-1}(r_a)(M\Delta^{-1}N)^*(r_a)\right\}\nonumber\\
&&+\widetilde{M}\Delta^{-1}\widetilde{N},\label{tiMN}\\
\widetilde{L}^{k+1}(\widetilde{q}_a)&=&T_I(r_a)L^{k}(q_a),\ \ k=0,1,2,....,\label{lqa}\\
(\widetilde{L}^*)^{k-1}(\widetilde{r}_a)&=&T_I(r_a)^{*-1}(L^*)^k(r_a),\
\ k=1,2,3....\label{lra}
\end{eqnarray}
where $r_a$ is one of the adjoint eigenfunctions of the cdKP
hierarchy (\ref{cdkplax}), $M$ and $N$ are two functions of $t$, and
\begin{eqnarray}
\widetilde{L}=T_I(r_a) L T_I(r_a)^{-1},\ \ \widetilde{q}_a=\Lambda^{-1}(1/r_a),\ \
\widetilde{r}_a=T_I(r_a)^{*-1} L^{*}(r_a),\nonumber\\
\widetilde{M}=T_I(r_a)(M),\ \
\widetilde{N}=T_I(r_a)^{*-1}(N). \label{symbols}
\end{eqnarray}
\end{lemma}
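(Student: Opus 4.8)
The plan is to prove all five identities by substituting the closed forms of the gauge operator and its relatives and then simplifying. First I would record, from $T_I(r_a)=\Lambda^{-1}(r_a^{-1})\Delta^{-1}r_a$, the explicit companions
$T_I(r_a)^{-1}=r_a^{-1}\Delta\,\Lambda^{-1}(r_a)$, $T_I(r_a)^{*}=r_a\Delta^{*-1}\Lambda^{-1}(r_a^{-1})$ and $T_I(r_a)^{*-1}=\Lambda^{-1}(r_a)\Delta^{*}r_a^{-1}$, each verified by composing and using $\Delta^{-1}\Delta=\Delta\Delta^{-1}=1$, $r_a\cdot r_a^{-1}=1$, together with the conjugation rules $(FG)^{*}=G^{*}F^{*}$ and $f^{*}=f$. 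These four operators are the only ingredients needed for the substitutions.

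For \eqref{tiMr} I would insert $T_I(r_a)$ and $T_I(r_a)^{-1}$ into $T_I(r_a)\,M\Delta^{-1}r_a\,T_I(r_a)^{-1}$. The decisive point is that the trailing $r_a$ of $M\Delta^{-1}r_a$ meets the leading $r_a^{-1}$ of $T_I(r_a)^{-1}$, so the interior segment $\Delta^{-1}r_a\,r_a^{-1}\Delta=\Delta^{-1}\Delta=1$ collapses; this is exactly why \eqref{tiMr} carries \emph{no} local correction term. One is left with $\Lambda^{-1}(r_a^{-1})\Delta^{-1}\big(r_aM\,\Lambda^{-1}(r_a)\big)$, and matching the coefficient against $T_I^{*-1}(r_a)(M\Delta^{-1}r_a)^{*}(r_a)$ using the adjoint formulas is a one-line check. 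I expect this to be the cleanest of the five.

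For \eqref{tiMN} and \eqref{tiqN} the same substitution instead leaves a genuine two-fold integral of the shape $\Lambda^{-1}(r_a^{-1})\Delta^{-1}(r_aM)\Delta^{-1}(Nr_a^{-1})\Delta\,\Lambda^{-1}(r_a)$, where the rightmost $\Delta$ no longer cancels cleanly. I would reorganize it through a commutation identity of the form $\Delta^{-1}B\Delta=B-\Delta^{-1}(\Delta(B))\Lambda$ (with $B=Nr_a^{-1}$), which separates the expression into a single integral term plus a purely local remainder. The integral term assembles, via the adjoint formulas, into the stated $\Delta^{-1}\{T_I^{*-1}(r_a)(M\Delta^{-1}N)^{*}(r_a)\}$ contribution, while the local remainder is precisely $\widetilde{M}\Delta^{-1}\widetilde{N}$ with $\widetilde{M}=T_I(r_a)(M)$ and $\widetilde{N}=T_I(r_a)^{*-1}(N)$. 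Identity \eqref{tiqN} is the case $M=q_a$; the emerging local coefficient is $T_I(r_a)(q_a)$, which I would rewrite as $\widetilde{L}(\widetilde{q}_a)$ by \eqref{lqa}, explaining why \eqref{tiqN} is displayed with $\widetilde{L}(\widetilde{q}_a)$ rather than $T_I(r_a)(q_a)$.

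Finally, \eqref{lra} and \eqref{lqa} I would establish by induction on $k$ from the intertwining relation $\widetilde{L}\,T_I(r_a)=T_I(r_a)\,L$ and its adjoint $\widetilde{L}^{*}T_I(r_a)^{*-1}=T_I(r_a)^{*-1}L^{*}$. For \eqref{lra} the base case $k=1$ is merely the definition $\widetilde{r}_a=T_I(r_a)^{*-1}L^{*}(r_a)$ in \eqref{symbols}, and the step $(\widetilde{L}^{*})^{k}(\widetilde{r}_a)=\widetilde{L}^{*}T_I(r_a)^{*-1}(L^{*})^{k}(r_a)=T_I(r_a)^{*-1}(L^{*})^{k+1}(r_a)$ is immediate, so this direction is painless. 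The main obstacle is the base case of \eqref{lqa}, namely $\widetilde{L}(\widetilde{q}_a)=T_I(r_a)(q_a)$. Here $\widetilde{q}_a=\Lambda^{-1}(r_a^{-1})$ is the \emph{new} eigenfunction created by the transformation, and one checks directly that $T_I(r_a)^{-1}(\widetilde{q}_a)=r_a^{-1}\Delta\big(\Lambda^{-1}(r_a)\widetilde{q}_a\big)=r_a^{-1}\Delta(1)=0$; thus the naive conjugation $\widetilde{L}(\widetilde{q}_a)=T_I(r_a)L\,T_I(r_a)^{-1}(\widetilde{q}_a)$ collapses to zero and is useless. I expect this step to require computing $\widetilde{L}(\widetilde{q}_a)$ from the explicit normal form of $\widetilde{L}$ in \eqref{ckplax1add}--\eqref{ckplax1minus}, carefully retaining the integration constant of $\Delta^{-1}$ that the kernel relation suppresses. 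Once the base case is secured, the inductive step $\widetilde{L}^{k+2}(\widetilde{q}_a)=\widetilde{L}\,T_I(r_a)L^{k}(q_a)=T_I(r_a)L^{k+1}(q_a)$ follows automatically from the intertwining applied to the generic function $L^{k}(q_a)$.
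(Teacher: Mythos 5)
Your proposal is correct and follows essentially the same route as the paper: (\ref{tiMN}) is proved by direct substitution plus difference-operator commutation identities, (\ref{tiMr}) and (\ref{tiqN}) are obtained as specializations (the paper gets (\ref{tiMr}) from (\ref{tiMN}) with $N=r_a$ via $T_I^{*-1}(r_a)(r_a)=0$ rather than by your direct collapse of the middle factors, which is an equivalent one-line variant), and (\ref{lqa}), (\ref{lra}) follow from the intertwining relation. Your diagnosis that $T_I(r_a)^{-1}(\widetilde{q}_a)=0$ forces one to retain the integration constant of $\Delta^{-1}$ is precisely the point the paper handles by its explicit convention $q_a\Delta^{-1}(0)=q_a$ and $q_i\Delta^{-1}(0)=0$ for $i\neq a$.
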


\begin{proof}
Firstly, according to $\Delta^{-1}f\Delta^{-1}=(\Delta^{-1}f)\Delta^{-1}-\Delta^{-1}\Lambda(\Delta^{-1}f)$ and
$\Delta f-\Lambda(f)\Delta=\Delta(f)$,
\begin{eqnarray*}
&&T_I(r_a)\cdot M\Delta^{-1}N\cdot T_I(r_a)^{-1}=\Lambda^{-1}(r_a^{-1})\Delta^{-1}r_a\cdot M\Delta^{-1}N\cdot r_a^{-1}\Delta\Lambda^{-1}(r_a)\\
&=&\Lambda^{-1}(r_a^{-1})\bigg(\Delta^{-1}(r_aM)\Delta^{-1}-\Delta^{-1}\Lambda(\Delta^{-1}(r_aM))\bigg)Nr_a^{-1}\Delta
\Lambda^{-1}(r_a)\\
&=&\Lambda^{-1}(r_a^{-1})\Delta^{-1}(r_aM)\Delta^{-1}\bigg(\Delta\Lambda^{-1}(Nr_a^{-1})-\Delta\Lambda^{-1}(Nr_a^{-1})\bigg)
\Lambda^{-1}(r_a)\\
&&-\Lambda^{-1}(r_a^{-1})\Delta^{-1}\bigg(\Delta\Lambda^{-1}(\Lambda(\Delta^{-1}(r_aM))Nr_a^{-1})- \Delta(\Delta^{-1}(r_aM)\Lambda^{-1}(Nr_a^{-1}))\bigg)\Lambda^{-1}(r_a)\\
&=&\Lambda^{-1}(r_a^{-1})\Delta^{-1}(r_aM)
\Lambda^{-1}(N)-\Lambda^{-1}(r_a^{-1})\Delta^{-1}(r_aM)\Delta^{-1}\bigg(\Delta\Lambda^{-1}(Nr_a^{-1})\bigg)
\Lambda^{-1}(r_a)\\
&&-\Lambda^{-1}(r_a^{-1})\Delta^{-1}(r_aM)\Lambda^{-1}(N)+\Lambda^{-1}(r_a^{-1})\Delta^{-1}
\Delta\bigg(\Delta^{-1}(r_aM)\Lambda^{-1}(Nr_a^{-1})\bigg)\Lambda^{-1}(r_a)\\
&=&T_I(r_a)(M)\Delta^{-1}T_I(r_a)^{*-1}(N)+\Lambda^{-1}(r_a^{-1})\Delta^{-1}\left\{T_I(r_a)^{*-1}(M\Delta^{-1}N)^*(r_a)\right\}\\
&=&\widetilde{M}\Delta^{-1}\widetilde{N}+\Lambda^{-1}(r_a^{-1})\Delta^{-1}\left\{T_I(r_a)^{*-1}(M\Delta^{-1}N)^*(r_a)\right\}.
\end{eqnarray*}
So (\ref{tiMN}) is be proved.
(\ref{tiMr}) can be derived from (\ref{tiMN}) for $N=r_a$, since
$T_I(r_a)^{*-1}(r_a)=0$.

Then for (\ref{lqa}),
\begin{eqnarray*}
\widetilde{L}^{k+1}(\widetilde{q}_a)&=&T_I(r_a)L^{k+1}T_I(r_a)^{-1}(\Lambda^{-1}(r_a^{-1}))\\
&=& T_I(r_a)L^{k}(L_++\sum_{i=0}^k q_i\Delta^{-1}r_i)r_a^{-1}\Delta
\Lambda^{-1}(r_a)\Lambda^{-1}(r_a^{-1})= T_I(r_a)L^{k}(q_a).
\end{eqnarray*}
Here we let $q_i\Delta^{-1}(0)=0$ for $i\neq a$, and
$q_a\Delta^{-1}(0)=q_a$.
And (\ref{tiqN}) can be derived from (\ref{tiMN}) and (\ref{lqa}).
At last,
\begin{eqnarray*}
T_I(r_a)^{*-1}(L^*)^k(r_a)=T_I(r_a)^{*-1}(L^*)^{k-1}T_I(r_a)^{*}T_I(r_a)^{*-1}L^*(r_a)=(\widetilde{L}^*)^{k-1}(\widetilde{r}_a).
\end{eqnarray*}
\end{proof}

In order to calculate the transformed formula of the part as $f\Delta^{-1}g$ in the Lax operator under the difference  type gauge transformation,
the following lemma is necessary.
\begin{lemma}\label{gaugelemmatd}
\begin{eqnarray}
T_D(q_a)\cdot M\Delta^{-1}r_a \cdot T_D(q_a)^{-1}\mbox{\hspace{-0.3cm}}&=&\mbox{\hspace{-0.3cm}}T_D{(q_a)}(M\Delta^{-1}r_a)(q_a)\Delta^{-1}\Lambda(q_a^{-1})+\widetilde{M}\Delta^{-1}\widetilde{L}^*(\widetilde{r}_a),\label{tdMr}\\
T_D(q_a)\cdot q_a\Delta^{-1}N \cdot T_D(q_a)^{-1}\mbox{\hspace{-0.3cm}}&=&\mbox{\hspace{-0.3cm}}T_D{(q_a)}(q_a\Delta^{-1}N)(q_a)\Delta^{-1}\Lambda(q_a^{-1}),\label{tdqN}\\
T_D(q_a)\cdot M\Delta^{-1}N \cdot T_D(q_a)^{-1}\mbox{\hspace{-0.3cm}}&=&\mbox{\hspace{-0.3cm}}T_D{(q_a)}(M\Delta^{-1}N)(q_a)\Delta^{-1}\Lambda(q_a^{-1})+\widetilde{M}\Delta^{-1}\widetilde{N},\label{tdMN}\\
\widetilde{L}^{k-1}(\widetilde{q}_a)&=&\mbox{\hspace{-0.3cm}}T_D(q_a)L^{k}(q_a),\ \ k=0,1,2,...,\label{tdqa}\\
(\widetilde{L}^*)^{k}(\widetilde{r}_a)&=&\mbox{\hspace{-0.3cm}}T_D^{*}(q_a)^{-1}(L^*)^{k-1}(r_a),\
\ k=1,2,3,...,\label{tdra}
\end{eqnarray}
where $r_a$ is one of the adjoint eigenfunctions of the cdKP
hierarchy (\ref{cdkplax}), $M$ and $N$ are two functions of $t$, and
\begin{eqnarray}
\widetilde{L}=T_D(q_a) LT_D(q_a)^{-1},\ \ \widetilde{q}_a=T_D(q_a)L(q_a),\ \
\widetilde{r}_a=\Lambda(r_a^{-1}),\nonumber\\
\widetilde{M}=T_D(q_a)(M),\ \
\widetilde{N}=T_D(q_a)^{*-1}(N). \label{symbols}
\end{eqnarray}
\end{lemma}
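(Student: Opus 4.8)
The plan is to follow the pattern already used for Lemma~\ref{gaugelemma}, replacing the integral-type operator by the difference-type one $T_D(q_a)=\Lambda(q_a)\Delta q_a^{-1}$, whose inverse is $T_D(q_a)^{-1}=q_a\Delta^{-1}\Lambda(q_a^{-1})$ and which satisfies $T_D(q_a)(q_a)=\Lambda(q_a)\Delta(1)=0$ since $\Delta(1)=0$. The heart of the argument is the master identity (\ref{tdMN}); the two special cases (\ref{tdMr}) and (\ref{tdqN}) will drop out of it, while (\ref{tdqa}) and (\ref{tdra}) are the eigenfunction identities handled by conjugation.

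First I would establish (\ref{tdMN}) by a direct computation entirely analogous to the one already carried out for (\ref{tiMN}). Substituting the explicit forms of $T_D(q_a)$ and $T_D(q_a)^{-1}$ gives $\Lambda(q_a)\Delta\,(q_a^{-1}M)\Delta^{-1}(Nq_a)\Delta^{-1}\Lambda(q_a^{-1})$, and I would simplify this with the two operator identities already invoked in the excerpt, namely $\Delta^{-1}f\Delta^{-1}=(\Delta^{-1}f)\Delta^{-1}-\Delta^{-1}\Lambda(\Delta^{-1}f)$ together with $\Delta f-\Lambda(f)\Delta=\Delta(f)$. Pushing the outer $\Delta$ to the right and collecting terms splits the expression into a rank-one piece that assembles into the residual term $T_D(q_a)(M\Delta^{-1}N)(q_a)\,\Delta^{-1}\Lambda(q_a^{-1})$, with right-most factor fixed at $\Lambda(q_a^{-1})=\widetilde r_a$ (the dual of the fixed left factor $\Lambda^{-1}(r_a^{-1})$ in the integral case), plus a remainder that recombines into $\widetilde M\Delta^{-1}\widetilde N$ with $\widetilde M=T_D(q_a)(M)$ and $\widetilde N=T_D(q_a)^{*-1}(N)$. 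This bookkeeping, and in particular checking that all spurious pure-multiplication parts cancel, is where most of the labour sits, and is the step I expect to be the main obstacle.

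Once (\ref{tdMN}) is in hand the two corollaries are immediate. Setting $N=r_a$ produces $\widetilde N=T_D(q_a)^{*-1}(r_a)$, which by the $k=1$ instance of (\ref{tdra}) equals $\widetilde L^{*}(\widetilde r_a)$, giving (\ref{tdMr}). Setting $M=q_a$ makes the regular part vanish because $\widetilde M=T_D(q_a)(q_a)=0$ (the difference-type analogue of the relation $T_I(r_a)^{*-1}(r_a)=0$ used to pass from (\ref{tiMN}) to (\ref{tiMr})), leaving only the residual term, which is exactly (\ref{tdqN}).

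Finally I would treat the eigenfunction identities by conjugation. For (\ref{tdqa}) the point is that $\widetilde q_a=T_D(q_a)L(q_a)$ is defined through $T_D(q_a)$ itself, so $T_D(q_a)^{-1}(\widetilde q_a)=L(q_a)$ and hence $\widetilde L^{k-1}(\widetilde q_a)=T_D(q_a)L^{k-1}T_D(q_a)^{-1}(\widetilde q_a)=T_D(q_a)L^{k}(q_a)$; this is the easy case, paralleling (\ref{lra}). The identity (\ref{tdra}) is the delicate one, paralleling (\ref{lqa}): since $\widetilde r_a=\Lambda(r_a^{-1})$ is prescribed directly rather than through $T_D(q_a)^{*}$, I would write $(\widetilde L^{*})^{k}(\widetilde r_a)=(T_D(q_a)^{*})^{-1}(L^{*})^{k}T_D(q_a)^{*}(\widetilde r_a)$ and reduce everything to the single claim $L^{*}T_D(q_a)^{*}(\widetilde r_a)=r_a$ (the $k=1$ case, which forces all $k$). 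Proving this requires expanding $L^{*}$ in its constrained form and observing that the rank-one term $r_a(\Delta^{*})^{-1}q_a$ combines with $T_D(q_a)^{*}=q_a^{-1}\Delta^{*}\Lambda(q_a)$ so that the $(\Delta^{*})^{-1}$ cancels against $\Delta^{*}$, leaving $r_a$ as the only surviving contribution while every other term is annihilated. Care with the adjoint conventions (the formal rule $\Delta^{*}=-\Delta$ versus the explicit $\Delta^{*}=\Lambda^{-1}-I$) will be needed to make this cancellation come out cleanly.
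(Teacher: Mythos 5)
Your plan coincides with the paper's proof in structure and in nearly every detail: the paper likewise proves the master identity (\ref{tdMN}) by the direct expansion you describe (using exactly the two operator identities you quote), obtains (\ref{tdqN}) by setting $M=q_a$ and invoking $T_D(q_a)(q_a)=0$, obtains (\ref{tdMr}) from (\ref{tdMN}) together with (\ref{tdra}), and proves (\ref{tdqa}) by the same one--line conjugation $T_D(q_a)L^{k-1}T_D(q_a)^{-1}\,T_D(q_a)L(q_a)$. The one genuine divergence is (\ref{tdra}): the paper disposes of it by a bare conjugation $T_D^{*}(q_a)^{-1}(L^*)^{k}T_D^{*}(q_a)\cdot T_D^{*}(q_a)^{-1}L^*(r_a)$, which implicitly treats $\widetilde r_a$ as $T_D^{*}(q_a)^{-1}L^{*}(r_a)$ and, as written, produces $(L^{*})^{k+1}$ rather than the stated $(L^{*})^{k-1}$, never using the prescribed value of $\widetilde r_a$. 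Your reduction to the base case $L^{*}T_D^{*}(q_a)(\widetilde r_a)=r_a$, verified from the constrained form of $L^{*}$ via the cancellation $r_a(\Delta^{*})^{-1}q_a\cdot q_a^{-1}\Delta^{*}\Lambda(q_a)=r_a\Lambda(q_a)$, is the correct dual of the paper's own proof of (\ref{lqa}) in Lemma \ref{gaugelemma} and is what the stated exponent actually requires; note that it forces $\widetilde r_a=\Lambda(q_a^{-1})$ (the $\widetilde r_0$ of Section 3, so the $\Lambda(r_a^{-1})$ in the symbol list is a typo) and that it leans on the same integration--constant convention ($r_i(\Delta^{*})^{-1}(0)=0$ for $i\neq a$ but $=r_a$ for $i=a$) that the paper uses for (\ref{lqa}). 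The only incomplete item is that you defer the actual expansion establishing (\ref{tdMN}); it is routine and exactly parallels the computation written out in full for (\ref{tiMN}), but it is the step that still has to be carried through.
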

\begin{proof}
Firstly, according to $\Delta^{-1} f\Delta^{-1}=(\Delta^{-1}f)\Delta^{-1}-\Delta^{-1} \Lambda(\Delta^{-1}f)$ and
$\Delta f-\Lambda(f)\Delta=\Delta(f)$,
\begin{eqnarray*}
&&T_D(q_a)\cdot M\Delta^{-1}N\cdot T_D(q_a)^{-1}=\Lambda(q_a)\Delta q_a^{-1}\cdot M\Delta^{-1}N\cdot q_a\Delta^{-1}\Lambda(q_a^{-1})\\
&=&\Lambda(q_a)\Delta(q_a^{-1} M)\bigg(\Delta^{-1}(N q_a)\Delta^{-1}-\Delta^{-1}\Lambda(\Delta^{-1}(q_aN))\bigg)
\Lambda(q_a^{-1})\\
&=&\Lambda(q_a)\bigg(\Delta(q_a^{-1}M\Delta^{-1}(Nq_a))+\Lambda(q_a^{-1}M\Delta^{-1}(Nq_a^{-1}))\Delta \bigg)\Delta^{-1} \Lambda(q_a^{-1})\\
&&-\Lambda(q_a)\bigg(\Delta(q_a^{-1}M)+\Lambda(q_a^{-1}M)\Delta\bigg)\Delta^{-1}
\Lambda(\Delta^{-1}(Nq_a))\Lambda^{-1}(q_a^{-1})\\
&=&\Lambda(q_a)\Delta\bigg(q_a^{-1}M\Delta^{-1}(Nq_a)\bigg)\Delta^{-1}
\Lambda(q_a^{-1})+\Lambda(q_a)\Lambda\bigg(q_a^{-1}M\Delta^{-1}(Nq_a)\bigg)
\Lambda(q_a^{-1})\\
&&-\Lambda(q_a)
\Delta(q_a^{-1}M)\Delta^{-1}\Lambda\bigg(\Delta^{-1}(Nq_a)\bigg)\Lambda(q_a^{-1})-\Lambda(q_a)\Lambda\bigg(q_a^{-1}M\Delta^{-1}(Nq_a)\bigg)
\Lambda(q_a^{-1})\\
&=&\Lambda(q_a)\Delta\bigg(q_a^{-1}M\Delta^{-1}(Nq_a)\bigg)\Delta^{-1}
\Lambda(q_a^{-1})-\Lambda(q_a)
\Delta(q_a^{-1}M)\Delta^{-1}\cdot\Lambda\Delta^{-1}(Nq_a)\Lambda(q_a^{-1})\\
&=&T_D(q_a)(M)\Delta^{-1}T_D(q_a)^{*-1}(N)+T_D(q_a)(M\Delta^{-1}N)(q_a)\Delta^{-1}\Lambda(q_a^{-1})\\
&=&\widetilde{M}\Delta^{-1}\widetilde{N}+T_D(q_a)(M\Delta^{-1}N)(q_a)\Delta^{-1}\Lambda(q_a^{-1}).
\end{eqnarray*}
So (\ref{tdMN}) is be proved.
(\ref{tdqN}) can be derived from (\ref{tdMN}) for $M=q_a$, since
$T_D(q_a)(q_a)=0$.

For (\ref{tdqa}),
\begin{eqnarray*}
\widetilde{L}^{k-1}(\widetilde{q}_a)&=&T_D(q_a)L^{k-1}T_D(q_a)^{-1}T_D(q_a)L(q_a^{-1})
= T_D(q_a)L^{k}(q_a).
\end{eqnarray*}
Then (\ref{tdMr}) can be  derived from (\ref{tdMN}) and (\ref{tdra}).

At last, for (\ref{tdra}),
\begin{eqnarray*}
T_D^{*}(q_a)^{-1}(L^*)^{k+1}(r_a)=T_D^{*}(q_a)^{-1}(L^*)^{k}T_D^{*}(q_a)T_D^{*}(q_a)^{-1}L^*(r_a)=(\widetilde{L}^*)^{k}(\widetilde{r}_a).
\end{eqnarray*}

\end{proof}

 In order  to prove the compatibility between the two types of the gauge transformation and the additional symmetry, the following operator identities are necessary.
 \begin{lemma}\label{fubu}
 Let $q,r$ be suitable function and $A$ be a PDO, then
\begin{eqnarray}
 \left(\Lambda^{-1}{(r^{-1})}\Delta^{-1}r Ar^{-1}\Delta\Lambda^{-1}(r)\right)_-&=&
   \Lambda^{-1}{(r^{-1})}\Delta^{-1}r A_-r^{-1}\Delta\Lambda^{-1}(r)\nonumber\\
   && -\Lambda^{-1}{(r^{-1})}\Delta^{-1}\Lambda^{-1}{(r)}\Delta(\Lambda^{-1}(r^{-1}A^{*}_+r)),\label{Ti}\\
   \left(\Lambda{(q)}\Delta q^{-1} A q\Delta^{-1}\Lambda(q^{-1})\right)_+&=&
   \Lambda{(q)}\Delta q^{-1} A_+q\Delta^{-1}\Lambda(q^{-1})\nonumber\\
   && -\Lambda{(q)}\Delta(q^{-1}A_+ (q))\Delta^{-1}\Lambda(q^{-1}).\label{Td}
\end{eqnarray}
\end{lemma}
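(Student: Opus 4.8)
The plan is to read both left-hand sides as gauge conjugations and to compute the required projection by splitting $A$ into its differential and integral parts. Since $T_I(r)=\Lambda^{-1}(r^{-1})\Delta^{-1}r$ and $T_I(r)^{-1}=r^{-1}\Delta\Lambda^{-1}(r)$, the operator on the left of \eqref{Ti} is exactly $\bigl(T_I(r)AT_I(r)^{-1}\bigr)_-$; likewise the left of \eqref{Td} is $\bigl(T_D(q)AT_D(q)^{-1}\bigr)_+$ with $T_D(q)=\Lambda(q)\Delta q^{-1}$ and $T_D(q)^{-1}=q\Delta^{-1}\Lambda(q^{-1})$. Writing $A=A_++A_-$ and using linearity of the projections, I would first dispose of the half that contributes nothing by an order count. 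As $T_I(r)$ has order $-1$ and $T_I(r)^{-1}$ has order $+1$, the operator $T_I(r)A_-T_I(r)^{-1}$ has order $\le -1$ and is therefore already purely integral, supplying the whole first term $\Lambda^{-1}(r^{-1})\Delta^{-1}rA_-r^{-1}\Delta\Lambda^{-1}(r)$ of \eqref{Ti}. Dually $T_D(q)A_-T_D(q)^{-1}$ has order $\le -1$, so its $(\cdot)_+$ vanishes and only the $A_+$ part survives in \eqref{Td}.

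It then remains to project the $A_+$ part, and here I would pull the outer multiplications through the projection: left and right multiplication by a function preserves the splitting into non-negative and negative powers of $\Delta$ (immediate from the Leibniz rule \eqref{81}), so $(\phi X\psi)_\pm=\phi X_\pm\psi$ for functions $\phi,\psi$. This reduces \eqref{Ti} to computing $(\Delta^{-1}B\Delta)_-$ with $B=rA_+r^{-1}$ differential, and \eqref{Td} to computing $(\Delta C\Delta^{-1})_-$ with $C=q^{-1}A_+q$ differential (then using $(\Delta C\Delta^{-1})_+=\Delta C\Delta^{-1}-(\Delta C\Delta^{-1})_-$). The heart of the matter is the pair of closed forms
\begin{equation*}
(\Delta^{-1}B\Delta)_-=-\Delta^{-1}\Delta\bigl(\Lambda^{-1}(B^{*}(1))\bigr),\qquad (\Delta C\Delta^{-1})_-=\Delta\bigl(C(1)\bigr)\Delta^{-1},
\end{equation*}
where $B^{*}(1)$ and $C(1)$ denote the actions of $B^{*}$ and $C$ on the constant function $1$.

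To establish these I would divide by $\Delta$ on the appropriate side. Any differential operator $F$ admits a left decomposition $F=\Delta G+e$ with $G$ differential and $e$ a function, whence $\Delta^{-1}F=G+\Delta^{-1}e$ and $(\Delta^{-1}F)_-=\Delta^{-1}e$; taking $F=B\Delta$ and evaluating the remainder through the adjoint gives $e=F^{*}(1)=(B\Delta)^{*}(1)=\Delta^{*}B^{*}(1)$, and $\Delta^{*}(f)=-\Delta(\Lambda^{-1}(f))$ then yields the first formula. Symmetrically, the right decomposition $\Delta C=G\Delta+e$ gives $(\Delta C\Delta^{-1})_-=e\Delta^{-1}$, and evaluating at $1$ with $\Delta(1)=0$ gives $e=(\Delta C)(1)=\Delta(C(1))$, the second formula. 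Substituting $B=rA_+r^{-1}$, so $B^{*}(1)=r^{-1}A_+^{*}(r)$, and $C=q^{-1}A_+q$, so $C(1)=q^{-1}A_+(q)$, and reinstating the outer factors $\Lambda^{-1}(r^{\mp1})$ and $\Lambda(q^{\pm1})$ reproduces exactly the second terms of \eqref{Ti} and \eqref{Td}.

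The step I expect to be the main obstacle is precisely the identification of these closed forms: normal-ordering $\Delta^{-1}B\Delta$ directly via \eqref{81} produces an infinite tail of negative powers, and the content of the lemma is that this tail collapses to the single term $\Delta^{-1}e$. The device that avoids the bookkeeping is the division-plus-evaluation-at-$1$ argument above, which succeeds because $\Delta(1)=0$ and $\Delta^{*}(1)=\Lambda^{-1}(1)-1=0$ annihilate everything but the remainder; the appearance of the adjoint $A_+^{*}$ in \eqref{Ti} is then simply the signature of taking the remainder on the left, while \eqref{Td} needs no adjoint because its remainder sits on the right. With the two closed forms in hand, \eqref{Ti} and \eqref{Td} follow from the bookkeeping of the first two paragraphs, the second identity being the right-divided dual of the first.
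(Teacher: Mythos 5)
Your argument is correct and follows essentially the same route as the paper: split $A$ into $A_++A_-$, dispose of the $A_-$ half by an order count, and reduce the $A_+$ half to a residue-type projection formula whose remainder is read off by evaluating at the constant function $1$ (using $\Delta(1)=\Delta^*(1)=0$). The only difference is that the paper simply invokes the cited identities $(Kq\Delta^{-1}r)_-=K(q)\Delta^{-1}r$ and $(q\Delta^{-1}rK)_-=q\Delta^{-1}K^*(r)$ for pure-difference $K$, of which your two closed forms for $(\Delta^{-1}B\Delta)_-$ and $(\Delta C\Delta^{-1})_-$ are exactly the needed special cases, so your write-up is just a self-contained derivation of the same computation.
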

\begin{proof}
With $(K q\Delta^{-1}r)_-=K(q)\Delta^{-1}r, (q\Delta^{-1}r K )_-=q\Delta^{-1}K^*(r)$ for pure-difference operator $K$ \cite{lmh20134},
\begin{eqnarray*}
&&\left(\Lambda^{-1}{(r^{-1})}\Delta^{-1}r Ar^{-1}\Delta\Lambda^{-1}(r)\right)_-\\
&=&
\left(\Lambda^{-1}{(r^{-1})}\Delta^{-1}r A_-r^{-1}\Delta\Lambda^{-1}(r)\right)_-
+\left(\Lambda^{-1}{(r^{-1})}\Delta^{-1}r A_+r^{-1}\Delta\Lambda^{-1}(r)\right)_-\\
&=&
  \Lambda^{-1}{(r^{-1})}\Delta^{-1}r A_-r^{-1}\Delta\Lambda^{-1}(r)
   -\Lambda^{-1}{(r^{-1})}\Delta^{-1}\Lambda^{-1}{(r)}\Delta(\Lambda^{-1}(r^{-1}A^{*}_+r)),
\end{eqnarray*}
so the (\ref{Ti}) is proved.
For (\ref{Td}), with $(K q\Delta^{-1}r)_-=K(q)\Delta^{-1}r$,
\begin{eqnarray*}
\left(\Lambda{(q)}\Delta q^{-1} A q\Delta^{-1}\Lambda(q^{-1})\right)_+
&=&\left(\Lambda{(q)}\Delta q^{-1} A_+ q\Delta^{-1}\Lambda(q^{-1})\right)_+\\
&=&  \Lambda{(q)}\Delta q^{-1} A_+q\Delta^{-1}\Lambda(q^{-1})
   -\Lambda{(q)}\Delta(q^{-1}A_+ (q))\Delta^{-1}\Lambda(q^{-1}).
\end{eqnarray*}
\end{proof}
\noindent{\bf Remark}: This lemma is a difference-analogue of the corresponding identities of PDO given by \cite{oevelRMP93,aratyn97}.
%%%%%%%%%%%%%%%%%%%%%%%%%%%%%%%%%%%%%%%%%%%%%%%%%%%%%%
\section{Additional symmetries of the cdKP hierarchy}
%%%%%%%%%%%%%%%%%%%%%%%%%%%%%%%%%%%%%%%%%%%%%%%%%%%%%%%
Define
\begin{equation}\label{xk1}
    X_k^{(1)}=\sum_{i=0}^{m}\sum_{j=0}^{k-1}\left(j-\frac{1}{2}(k-1)\right)L^{k-1-j}(q_i)\Delta^{-1}(L^*)^j(r_i);\
    \ k\geq1,
\end{equation}
which is the essential to ensure the compatibility of the additional
Virasoro symmetry with the constraints (\ref{cdkplax}) defining the
cdKP hierarchy.
The additional symmetry flows for the cdKP hierarchy, spanning the Virasoro algebra, are given by \cite{lmh20134}:
\begin{equation}\label{addsym}
    \partial_k^*L=[-(M_{\Delta}L^k)_-+X_{k-1}^{(1)},L].
\end{equation}
Here $X_{k-1}^{(1)}$ is (\ref{xk1}) for $m=1$. $M_{\Delta}$ is the Orlov-Schulman operator \cite{OS86} defined in the
dressing the ``bare" $M^{(0)}$ operator:
\begin{equation}\label{0osoperator}
    M^{(0)}=\sum_{l\geq1}lt_l\Delta^{l-k}=X_{(k)}+\sum_{l\geq1}(l+k)t_{l+k}\Delta^l;
    \ \ X_{(k)}=\sum_{l=1}^klt_l\Delta^{l-k}
\end{equation}
that is,
\begin{eqnarray}
M_{\Delta}\mbox{\hspace{-0.3cm}}&=&\mbox{\hspace{-0.3cm}} WM^{(0)}W^{-1}=WX_{(k)}W^{-1}+\sum_{l\geq1}(l+k)t_{l+k}L^l=\sum_{l\geq0}(l+k)t_{l+k}L^l_++(M_{\Delta})_-,\label{osoperator}\\
(M_{\Delta})_-\mbox{\hspace{-0.3cm}}&=&\mbox{\hspace{-0.3cm}}WX_{(k)}W^{-1}-kt_k-\sum_{l\geq1}(l+k)t_{l+k}\frac{\partial
W}{\partial t_l}W^{-1},\label{osminus}
\end{eqnarray}
with (\ref{tkaction}) used in (\ref{osminus}).

Then accordingly, the actions of the additional symmetry flows on
the dressing operators and BA functions are showed that:
\begin{equation}\label{actonwba}
    \partial_k^*W=\left(-(M_{\Delta}L^k)_-+X_{k-1}^{(1)}\right)W;\ \
    \partial_k^*\psi(t,\lambda)=\left(-(M_{\Delta}L^k)_-+X_{k-1}^{(1)}\right)(\psi(t,\lambda)).
\end{equation}
The corresponding actions on the eigenfunctions $q_i$ and the
adjoint eigenfunctions $r_i$ are derived by considering
$(\partial_k^*L)_-$ listed as follows \cite{lmh20134}:
\begin{eqnarray}
\partial_k^*q_i&=&(M_{\Delta}L^k)_+(q_i)+\frac{k}{2}L^{k-1}(q_i)+X_{k-1}^{(1)}(q_i),\label{addsymq}\\
\partial_k^*r_i&=&-(M_{\Delta}L^k)_+^*(r_i)+\frac{k}{2}(L^*)^{k-1}(r_i)-(X_{k-1}^{(1)})^*(r_i).\label{addsymr}
\end{eqnarray}

%%%%%%%%%%%%%%%%%%%%%%%%%%%%%%%%%%%%%%%%%%%%%%%%%%%%%%
\section{Additional symmetries versus two types gauge transformations}
%%%%%%%%%%%%%%%%%%%%%%%%%%%%%%%%%%%%%%%%%%%%%%%%%%%%%%%
In this section, we will restrict to the cdKP hierarchy
((\ref{cdkplax}) for $m=1,l=1$). And thus its Lax operator is given by
\begin{equation}\label{dkplaxm1}
    L=\Delta+q\Delta^{-1}r.
\end{equation}

In order to investigate the changes of the additional symmetries
under the integral type gauge transformation $T_I(r)$, some useful
lemmas are needed.

\begin{lemma}
\begin{eqnarray}\label{transformedx}
\mbox{\hspace{-1cm}} T_I(r)X_{k-1}^{(1)}T_I(r)^{-1}&=&\widetilde{X}_{k-1}^{(1)}
    +\sum_{j=0}^{k-2}\widetilde{L}^{k-j-2}(\widetilde{q})\Delta^{-1}\widetilde{L}^j(\widetilde{r})\nonumber\\
   && +\Lambda{(r^{-1})}\Delta^{-1}\left\{(T^{*}_I(r))^{-1}(X_{k-1}^{(1)}-\frac{k}{2}L^{k-1})^*(r)\right\}.
\end{eqnarray}
\end{lemma}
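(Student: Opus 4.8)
The plan is to conjugate $X_{k-1}^{(1)}$ summand by summand and then reassemble the pieces. Since we are in the case $m=1$, $l=1$ with $L=\Delta+q\Delta^{-1}r$, I would first write
\[
X_{k-1}^{(1)}=\sum_{j=0}^{k-2}\Big(j-\tfrac{k-2}{2}\Big)L^{k-2-j}(q)\,\Delta^{-1}\,(L^{*})^{j}(r),
\]
so that every term has the shape $M\Delta^{-1}N$ with $M=L^{k-2-j}(q)$ and $N=(L^{*})^{j}(r)$ both ordinary functions of $t$. This is exactly the input for which identity (\ref{tiMN}) of Lemma \ref{gaugelemma} was built.

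Second, I would apply (\ref{tiMN}) to each summand, which splits $T_I(r)\,M\Delta^{-1}N\,T_I(r)^{-1}$ into a ``local'' piece $\widetilde{M}\Delta^{-1}\widetilde{N}$ and an integral correction $\Lambda^{-1}(r^{-1})\Delta^{-1}\{T_I(r)^{*-1}(M\Delta^{-1}N)^{*}(r)\}$. The correction pieces sum trivially, by linearity of $(\,\cdot\,)^{*}$ and of $T_I(r)^{*-1}$, into $\Lambda^{-1}(r^{-1})\Delta^{-1}\{T_I(r)^{*-1}(X_{k-1}^{(1)})^{*}(r)\}$. For the local pieces I would use the intertwiners (\ref{lqa}) and (\ref{lra}) to rewrite $\widetilde{M}=T_I(r)L^{k-2-j}(q)=\widetilde{L}^{\,k-1-j}(\widetilde{q})$ and $\widetilde{N}=T_I(r)^{*-1}(L^{*})^{j}(r)=(\widetilde{L}^{*})^{\,j-1}(\widetilde{r})$ for $j\ge1$, while the $j=0$ local piece drops out because $T_I(r)^{*-1}(r)=0$.

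The core bookkeeping is then to reindex the local sum and match it against $\widetilde{X}_{k-1}^{(1)}$. The key feature is that the two intertwiners act asymmetrically: (\ref{lqa}) raises the power of $\widetilde{L}$ on $\widetilde{q}$ by one, whereas (\ref{lra}) lowers the power of $\widetilde{L}^{*}$ on $\widetilde{r}$ by one. After the shift $j\mapsto j+1$ the coefficient $j-\tfrac{k-2}{2}$ becomes $j+1-\tfrac{k-2}{2}$, i.e. it increases by exactly $1$. The unshifted part reproduces $\widetilde{X}_{k-1}^{(1)}$ except for its top $j=k-2$ term, and the ``$+1$'' part produces the advertised extra sum $\sum_{j=0}^{k-2}\widetilde{L}^{\,k-2-j}(\widetilde{q})\Delta^{-1}(\widetilde{L}^{*})^{j}(\widetilde{r})$, leaving one uncancelled endpoint term.

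I expect the main obstacle to be precisely this boundary term and, with it, the appearance of $-\tfrac{k}{2}L^{k-1}$ inside the correction, which is not visible before the endpoints are tracked. Collecting the endpoints of the reindexed sum gives the leftover $-\big(\tfrac{k-2}{2}+1\big)\widetilde{q}\,\Delta^{-1}(\widetilde{L}^{*})^{k-2}(\widetilde{r})=-\tfrac{k}{2}\,\Lambda^{-1}(r^{-1})\,\Delta^{-1}(\widetilde{L}^{*})^{k-2}(\widetilde{r})$, where I used $\widetilde{q}=\Lambda^{-1}(r^{-1})$. Using (\ref{lra}) once more in the form $(\widetilde{L}^{*})^{k-2}(\widetilde{r})=T_I(r)^{*-1}(L^{*})^{k-1}(r)$ rewrites this leftover as $-\tfrac{k}{2}\Lambda^{-1}(r^{-1})\Delta^{-1}\{T_I(r)^{*-1}(L^{*})^{k-1}(r)\}$, which folds into the already collected correction term; since $(L^{k-1})^{*}=(L^{*})^{k-1}$, this converts $(X_{k-1}^{(1)})^{*}(r)$ into $(X_{k-1}^{(1)}-\tfrac{k}{2}L^{k-1})^{*}(r)$ and yields (\ref{transformedx}). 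Finally I would note, for consistency of conventions, that $\widetilde{L}^{j}(\widetilde{r})$ in the statement is to be read as $(\widetilde{L}^{*})^{j}(\widetilde{r})$ and that the computation produces the prefactor $\Lambda^{-1}(r^{-1})=\widetilde{q}$.
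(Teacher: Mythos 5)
Your proposal is correct and follows essentially the same route as the paper's own proof: apply identity (\ref{tiMN}) of Lemma \ref{gaugelemma} term by term to $X_{k-1}^{(1)}$, collect the $\Lambda^{-1}(r^{-1})\Delta^{-1}\{\cdots\}$ corrections, convert the local pieces with (\ref{lqa})--(\ref{lra}), and reindex so that the uncancelled endpoint $-\frac{k}{2}\,\widetilde{q}\,\Delta^{-1}(\widetilde{L}^*)^{k-2}(\widetilde{r})$ folds back into the correction as $-\frac{k}{2}L^{k-1}$. Your closing remarks on conventions are also right: the statement's $\Lambda(r^{-1})$ and $\widetilde{L}^j(\widetilde{r})$ should indeed be read as $\Lambda^{-1}(r^{-1})=\widetilde{q}$ and $(\widetilde{L}^*)^j(\widetilde{r})$.
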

\begin{proof}
According to Lemma \ref{gaugelemma} and (\ref{xk1}), then
\begin{eqnarray*}
&&T_I(r)X_{k-1}^{(1)}T_I(r)^{-1}\\
&=&-\Lambda^{-1}{(r^{-1})}\Delta^{-1}T_D(r)(X_{k-1}^{(1)})^*(r)
+\sum_{j=1}^{k-2}\left(j-\frac{1}{2}(k-2)\right)\widetilde{L}^{k-j-1}(\widetilde{q})\Delta^{-1}(\widetilde{L}^*)^{j-1}(\widetilde{r})\\
&=& -\Lambda^{-1}{(r^{-1})}\Delta^{-1}T_D(r)(X_{k-1}^{(1)})^*(r)
+\sum_{j=0}^{k-2}\left(j-\frac{1}{2}(k-2)\right)\widetilde{L}^{k-j-2}(\widetilde{q})\Delta^{-1}(\widetilde{L}^*)^{j}(\widetilde{r})\\
&&+\sum_{j=0}^{k-2}\widetilde{L}^{k-j-2}(\widetilde{q})\Delta^{-1}(\widetilde{L}^*)^{j}(\widetilde{r})
-\left(1+k-2-\frac{1}{2}(k-2)\right)\widetilde{q}\Delta^{-1}(\widetilde{L}^*)^{k-2}(\widetilde{r})\\
&=&
-\Lambda^{-1}{(r^{-1})}\Delta^{-1}T_D(r)(X_{k-1}^{(1)})^*(r)+\widetilde{X}_{k-1}^{(1)}+\sum_{j=0}^{k-2}\widetilde{L}^{k-j-2}(\widetilde{q})\Delta^{-1}(\widetilde{L}^*)^j(\widetilde{r})\\
&&-\frac{k}{2}r^{-1}\Delta^{-1}T_I(r)^{*-1}(L^*)^{k-1}(r)\\
&=&\widetilde{X}_{k-1}^{(1)}+\sum_{j=0}^{k-2}\widetilde{L}^{k-j-2}(\widetilde{q})\Delta^{-1}\widetilde{L}^j(\widetilde{r})+r^{-1}\Delta^{-1}\left\{T_I(r)^{*-1}(X_{k-1}^{(1)}-\frac{k}{2}L^{k-1})^*(r)\right\}.
\end{eqnarray*}
\end{proof}
\begin{lemma}
\begin{equation}\label{pakt}
\mbox{\hspace{-1.2cm}}  \partial_k^*T_I(r)\cdot
    T_I(r)^{-1}=\Lambda^{-1}(r^{-1})\Delta^{-1}\left\{T_I(r)^{*-1}\left(-(M_{\Delta}L^k)^*_++\frac{k}{2}(L^*)^{k-1}-(X_{k-1}^{(1)})^*\right)(r)\right\}.
\end{equation}
\end{lemma}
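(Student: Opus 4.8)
The plan is to read (\ref{pakt}) as the specialization of a purely variational identity for the integral-type operator. Since $T_I(r)=\Lambda^{-1}(r^{-1})\Delta^{-1}r$ depends on the additional time $t_k^*$ only through the adjoint eigenfunction $r$ (the bare operators $\Lambda^{-1}$ and $\Delta^{-1}$ being $t$-independent), the Leibniz rule gives $\partial_k^*T_I(r)=\partial_k^*\!\big(\Lambda^{-1}(r^{-1})\big)\Delta^{-1}r+\Lambda^{-1}(r^{-1})\Delta^{-1}\partial_k^*(r)$. Hence it suffices to establish, for an arbitrary variation $\delta r$ of $r$, the operator identity $\big(\delta T_I(r)\big)T_I(r)^{-1}=\Lambda^{-1}(r^{-1})\Delta^{-1}\big\{T_I(r)^{*-1}(\delta r)\big\}$, and then to substitute $\delta r=\partial_k^*r$, reading the explicit form of $\partial_k^*r$ off (\ref{addsymr}); its bracket is exactly the one appearing on the right-hand side of (\ref{pakt}).

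First I would right-multiply by $T_I(r)^{-1}=r^{-1}\Delta\Lambda^{-1}(r)$. The term coming from differentiating $\Lambda^{-1}(r^{-1})$ collapses, because $\Delta^{-1}r\cdot r^{-1}\Delta$ is the identity operator, to the multiplication operator $\partial_k^*\!\big(\Lambda^{-1}(r^{-1})\big)\Lambda^{-1}(r)=-\Lambda^{-1}\!\big(r^{-1}\partial_k^*r\big)$. Writing $g:=r^{-1}\partial_k^*r$, the second term becomes $\Lambda^{-1}(r^{-1})\Delta^{-1}g\,\Delta\,\Lambda^{-1}(r)$, which I would simplify using the discrete commutation rule $\Delta\circ h=\Lambda(h)\Delta+\Delta(h)$ in the equivalent form $\Delta^{-1}g\,\Delta=\Lambda^{-1}(g)-\Delta^{-1}\Delta(\Lambda^{-1}(g))$. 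The zeroth-order piece $\Lambda^{-1}(g)$ produced in this way cancels precisely the contribution of the first term, leaving $\big(\delta T_I(r)\big)T_I(r)^{-1}=-\Lambda^{-1}(r^{-1})\Delta^{-1}\big\{\Delta(\Lambda^{-1}(g))\,\Lambda^{-1}(r)\big\}$, which is already of integral type $\Lambda^{-1}(r^{-1})\Delta^{-1}\{\text{function}\}$.

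It then remains to identify the residual function with the conjugate, i.e. to check that $-\Delta(\Lambda^{-1}(g))\,\Lambda^{-1}(r)=T_I(r)^{*-1}(\partial_k^*r)$. Using $T_I(r)^{*-1}=\Lambda^{-1}(r)(\Lambda^{-1}-I)r^{-1}$, which follows from $(FG)^*=G^*F^*$ together with the genuine discrete adjoint $\Delta^*=\Lambda^{-1}-I$, a short pointwise computation shows that both sides equal $\partial_k^*r(n-1)-\tfrac{r(n-1)}{r(n)}\partial_k^*r(n)$. Since $\Delta$ and $\Lambda^{-1}$ commute, this step is pure bookkeeping. Substituting (\ref{addsymr}) for $\partial_k^*r$ into $T_I(r)^{*-1}(\partial_k^*r)$ then reproduces (\ref{pakt}) verbatim.

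The main obstacle is the correct treatment of the discrete adjoint and the shift operators in this last identification: the residual function must collapse to exactly $T_I(r)^{*-1}(\partial_k^*r)$ with the right argument shift, and this matching succeeds only with the true formal adjoint $\Delta^*=\Lambda^{-1}-I$ rather than the naive $\Delta^*=-\Delta$. Once the variational identity has been isolated and the commutation rule applied, the cancellation of the order-zero terms and the alignment of the shifts are the only delicate points, and the passage from the variational identity to (\ref{pakt}) via (\ref{addsymr}) is then immediate.
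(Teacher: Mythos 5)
Your proof is correct and follows essentially the same route as the paper's: differentiate $T_I(r)$ by the Leibniz rule (the paper equivalently computes $-T_I(r)\partial_k^*(T_I(r)^{-1})$), commute the multiplication operator $r^{-1}\partial_k^*(r)$ past $\Delta$, cancel the order-zero pieces, identify the residue with $T_I(r)^{*-1}(\partial_k^*r)$, and substitute (\ref{addsymr}). Your explicit pointwise verification of that last identification using the genuine discrete adjoint $\Delta^*=\Lambda^{-1}-I$ is a welcome extra check, since the paper's penultimate line carries a spurious $\Lambda^{-1}$ acting on $\partial_k^*r$ that is silently dropped in its final line.
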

\begin{proof}By (\ref{addsymr}),
\begin{eqnarray*}
 \partial_k^*T_I(r)\cdot T_I(r)^{-1}&=&-T_I(r)\partial_k^*(T_I(r)^{-1})\\
 &=&\Lambda^{-1}(r^{-1})\Delta^{-1}r^{-1}\partial_k^{*}(r)\Delta \Lambda^{-1}(r)-\Lambda^{-1}(r^{-1}\partial_k^*(r))\\
 &=&\Lambda^{-1}(r^{-1})\Delta^{-1}\bigg(\Delta \Lambda^{-1}(\partial_k^{*}(r) r^{-1})-\Delta \Lambda^{-1}(\partial_k^{*}(r) r^{-1})\bigg)\Lambda^{-1}(r)-\Lambda^{-1}(r^{-1}\partial_k^*(r))\\
 &=& -\Lambda^{-1}(r^{-1})\Delta^{-1}(\Delta \Lambda^{-1}(\partial_k^{*}(r) r^{-1}))\Lambda^{-1}(r)\\
 &=& \Lambda^{-1}(r^{-1})\Delta^{-1}T_I(r)^{*-1}\Lambda^{-1}(\partial_k^*r)\\
 &=&\Lambda^{-1}(r^{-1})\Delta^{-1}\left\{T_I(r)^{*-1}\left(-(M_{\Delta}L^k)^*_++\frac{k}{2}(L^*)^{k-1}-(X_{k-1}^{(1)})^*\right)(r)\right\}.
\end{eqnarray*}
\end{proof}

\begin{theorem}\label{addgau}
The additional symmetry flows (\ref{addsym}) for the cdKP
hierarchy ((\ref{cdkplax}) for $m=1,l=1$) commute with the integral type
transformations $T_I(r)$ preserving the form of cdKP hierarchy, up to shifting of
(\ref{addsym}) by ordinary time flows, that is,
\begin{equation}\label{cdkpaddgauge}
    \partial_k^*\widetilde{L}=[-(\widetilde{M}_{\Delta}\widetilde{L}^k)_-+\widetilde{X}_{k-1}^{(1)},\widetilde{L}]-\frac{\partial \widetilde{L}}{\partial
    t_{k-1}}.
\end{equation}
\end{theorem}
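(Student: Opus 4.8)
The plan is to differentiate the conjugation $\widetilde{L}=T_I(r)LT_I(r)^{-1}$ along the additional flow and collapse everything into a single commutator with $\widetilde{L}$. Writing $T=T_I(r)$ and $G=\partial_k^{*}T\cdot T^{-1}$, the Leibniz rule together with $\partial_k^{*}(T^{-1})=-T^{-1}(\partial_k^{*}T)T^{-1}$ gives $\partial_k^{*}\widetilde{L}=[G,\widetilde{L}]+T(\partial_k^{*}L)T^{-1}$. Substituting the additional-symmetry flow (\ref{addsym}) and using $\widetilde{L}^{k}=TL^{k}T^{-1}$ turns this into $\partial_k^{*}\widetilde{L}=[\mathcal{A},\widetilde{L}]$ with
\[
\mathcal{A}=G-T(M_{\Delta}L^{k})_{-}T^{-1}+TX_{k-1}^{(1)}T^{-1}.
\]
The whole theorem then reduces to the operator identity $\mathcal{A}=-(\widetilde{M}_{\Delta}\widetilde{L}^{k})_{-}+\widetilde{X}_{k-1}^{(1)}+(\widetilde{L}^{k-1})_{-}$, since $[(\widetilde{L}^{k-1})_{-},\widetilde{L}]=-[(\widetilde{L}^{k-1})_{+},\widetilde{L}]=-\frac{\partial\widetilde{L}}{\partial t_{k-1}}$ by (\ref{tranlaxeq}), which produces precisely the advertised shift by the ordinary $(k-1)$-flow.

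The key structural input, and the step I expect to be the main obstacle, is controlling $M_{\Delta}$ under the order-shifting integral transformation $T$. Even though $T$ is integral and the transformed dressing operator $\widetilde{W}$ is \emph{not} simply $TW$, the combination $M^{(0)}\Delta^{k}=\sum_{l\geq1}lt_{l}\Delta^{l}$ is independent of $k$ and has coefficients constant in $n$; since both $\widetilde{W}$ and $TW$ conjugate $\Delta$ into $\widetilde{L}$, the operator $C:=\widetilde{W}^{-1}TW$ commutes with $\Delta$, hence with $M^{(0)}\Delta^{k}$. This yields the clean law $\widetilde{M}_{\Delta}\widetilde{L}^{k}=TM_{\Delta}L^{k}T^{-1}$ with no correction term. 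I would use it to rewrite $-T(M_{\Delta}L^{k})_{-}T^{-1}=-\widetilde{M}_{\Delta}\widetilde{L}^{k}+T(M_{\Delta}L^{k})_{+}T^{-1}$ and then, observing that $T(M_{\Delta}L^{k})_{-}T^{-1}$ is purely negative (so $(\widetilde{M}_{\Delta}\widetilde{L}^{k})_{+}=(T(M_{\Delta}L^{k})_{+}T^{-1})_{+}$), collapse this block to $-(\widetilde{M}_{\Delta}\widetilde{L}^{k})_{-}+(T(M_{\Delta}L^{k})_{+}T^{-1})_{-}$.

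It then remains to assemble $\mathcal{A}+(\widetilde{M}_{\Delta}\widetilde{L}^{k})_{-}-\widetilde{X}_{k-1}^{(1)}$ from the three computed pieces. Setting $P_{+}=(M_{\Delta}L^{k})_{+}$, Lemma \ref{fubu}, eq.\ (\ref{Ti}) applied to $A=P_{+}$ (so $A_{-}=0$) gives $(TP_{+}T^{-1})_{-}=\Lambda^{-1}(r^{-1})\Delta^{-1}\{T_I(r)^{*-1}P_{+}^{*}(r)\}$, after using $T_I(r)^{*-1}(f)=\Lambda^{-1}(r)\Delta^{*}(r^{-1}f)$ and $\Delta^{*}(g)=-\Delta(\Lambda^{-1}(g))$. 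Feeding in (\ref{pakt}) for $G$ and (\ref{transformedx}) for $TX_{k-1}^{(1)}T^{-1}$, the $\frac{k}{2}(L^{*})^{k-1}$ and $(X_{k-1}^{(1)})^{*}$ integral contributions in $G$ cancel against their counterparts in (\ref{transformedx}), leaving exactly $-\Lambda^{-1}(r^{-1})\Delta^{-1}\{T_I(r)^{*-1}P_{+}^{*}(r)\}$, which annihilates $(TP_{+}T^{-1})_{-}$. The residual sum $\sum_{j=0}^{k-2}\widetilde{L}^{k-2-j}(\widetilde{q})\Delta^{-1}(\widetilde{L}^{*})^{j}(\widetilde{r})$ produced by (\ref{transformedx}) is then recognized, via the standard closed form $(\widetilde{L}^{k-1})_{-}=\sum_{j=0}^{k-2}\widetilde{L}^{k-2-j}(\widetilde{q})\Delta^{-1}(\widetilde{L}^{*})^{j}(\widetilde{r})$ valid for a constrained Lax operator with a single $q\Delta^{-1}r$ term, as precisely $(\widetilde{L}^{k-1})_{-}$.

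Collecting the surviving pieces gives $\mathcal{A}=-(\widetilde{M}_{\Delta}\widetilde{L}^{k})_{-}+\widetilde{X}_{k-1}^{(1)}+(\widetilde{L}^{k-1})_{-}$, and commuting with $\widetilde{L}$ delivers (\ref{cdkpaddgauge}). The delicate part of the argument is twofold: first, justifying the transformation law for $M_{\Delta}$ despite the order shift built into $T_I(r)$; and second, the careful tracking of the $\Lambda^{\pm1}(r^{\mp1})$ prefactors so that the integral (pure-tail) terms from $G$, from (\ref{transformedx}), and from $(TP_{+}T^{-1})_{-}$ cancel in pairs, isolating the single surviving $(\widetilde{L}^{k-1})_{-}$ that is responsible for the ordinary-flow shift.
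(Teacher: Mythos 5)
Your proof is correct and follows the paper's own route essentially step for step: the same reduction of $\partial_k^*\widetilde{L}$ to a single commutator $[\mathcal{A},\widetilde{L}]$, the same use of (\ref{transformedx}), (\ref{pakt}) and identity (\ref{Ti}) of Lemma \ref{fubu} to cancel the integral-tail terms down to $-(\widetilde{M}_{\Delta}\widetilde{L}^k)_-$, and the same identification of the leftover sum with $(\widetilde{L}^{k-1})_-$ producing the shift $-\partial\widetilde{L}/\partial t_{k-1}$. The only substantive addition is your explicit justification of $\widetilde{M}_{\Delta}\widetilde{L}^k=T_I(r)M_{\Delta}L^kT_I(r)^{-1}$ via the intertwiner $\widetilde{W}^{-1}T_I(r)W$ commuting with $\Delta$ — a point the paper uses silently when it applies (\ref{Ti}) to $A=M_{\Delta}L^k$ rather than to $A=(M_{\Delta}L^k)_+$ as you do; the two applications are trivially equivalent.
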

\begin{proof}Firstly, by (\ref{addsym}),
\begin{eqnarray}
\partial_k^* \widetilde{L}&=& \partial_k^* T_I(r)\cdot
LT_I(r)^{-1}+T_I(r)\partial_k^*L\cdot
T_I(r)^{-1}-T_I(r)LT_I(r)^{-1}\cdot\partial_k^*T_I(r)\cdot T_I(r)^{-1}\nonumber\\
&=&
\left[T_I(r)\left(-(M_{\Delta}L^k)_-+X_{k-1}^{(1)}\right)T_I(r)^{-1}+\partial_k^*T_I(r)\cdot
T_I(r)^{-1}, \widetilde{L}\right]\label{proof1}
\end{eqnarray}

Then with the help of (\ref{transformedx}), (\ref{pakt}), and the
following useful formula  in Lemma.\ref{fubu},
we have
\begin{eqnarray}
&&T_I(r)\left(-(M_{\Delta}L^k)_-+X_{k-1}^{(1)}\right)T_I(r)^{-1}+\partial_k^*T_I(r)\cdot
T_I(r)^{-1}\nonumber\\
&=&
T_I(r)\left(-M_{\Delta}L^k)_-\right)T^{-1}_I(r)-\Lambda^{-1}(r^{-1})\Delta^{-1}\left\{T^*_I(r)^{-1}\left((M_{\Delta}L^k)^*_+-\frac{k}{2}(L^*)^{k-1}+(X_{k-1}^{(1)})^*\right)(r)\right\}\nonumber\\
&&+\widetilde{X}_{k-1}^{(1)}
+\sum_{j=0}^{k-2}\widetilde{L}^{k-j-2}(\widetilde{q})\Delta^{-1}\widetilde{L}^j(\widetilde{r})+\Lambda^{-1}(r^{-1})\Delta^{-1}\left\{T_I(r)^{*-1}\left((X_{k-1}^{(1)})^*-\frac{k}{2}(L^*)^{k-1}\right)(r)\right\}\nonumber\\
&=&-(\widetilde{M}_{\Delta}\widetilde{L}^k)_-+\widetilde{X}_{k-1}^{(1)}+(\widetilde{L}^{k-1})_-,\label{proof2}
\end{eqnarray}
where the following relation\cite{lmh20134} is used,
\begin{equation}\label{laxminus}
    (\widetilde{L}^{k-1})_-=\sum_{j=0}^{k-2}\widetilde{L}^{k-j-2}(\widetilde{q})\Delta^{-1}\widetilde{L}^j(\widetilde{r}).
\end{equation}
In the above process,
\begin{eqnarray*}
&&-T_I(r)(M_{\Delta}L^k)_-T^{-1}_I(r)-\Lambda^{-1}(r^{-1})\Delta^{-1}T^*_I(r)^{-1}((M_{\Delta}L^k)^*_+(r))\\
&&=-T_I(r)(M_{\Delta}L^k)_-T^{-1}_I(r)+\Lambda^{-1}(r^{-1})\Delta^{-1}(r(M_{\Delta}L^k)_+T_I(r)^{-1})^*)\\
&&=-(\widetilde{M}_{\Delta}\widetilde{L}^k)_-,
\end{eqnarray*} which can be got by means of the identity (\ref{Ti}) of Lemma.\ref{fubu}.

At last, the substituting (\ref{proof2}) into (\ref{proof1}) gives
rise to (\ref{cdkpaddgauge}).
\end{proof}

For the difference type gauge transformation $T_D(q)$, there are some lemma as following.

\begin{lemma}
\begin{eqnarray}\label{transformtd}
\mbox{\hspace{-0.6cm}}  T_D(q)X_{k-1}^{(1)} T^{-1}_D(q)\mbox{\hspace{-0.3cm}}&=&\mbox{\hspace{-0.3cm}}\widetilde{X}_{k-1}^{(1)}-(\widetilde{L}^{k-1})_-
+\left(T_D(q)(X_{k-1}^{(1)}+\frac{k}{2}L^{k-1}(q_i))\right)q\Delta^{-1} \Lambda(q^{-1}).
\end{eqnarray}
\end{lemma}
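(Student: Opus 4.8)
The plan is to mirror the integral-type computation of Lemma \ref{transformedx}, with Lemma \ref{gaugelemmatd} now playing the role that Lemma \ref{gaugelemma} played there. First I would expand $X_{k-1}^{(1)}$ from (\ref{xk1}) (taken with $m=1$ and $k$ replaced by $k-1$) as the weighted sum over $j=0,\dots,k-2$ of the elementary blocks $L^{k-2-j}(q)\Delta^{-1}(L^*)^j(r)$, each carrying the weight $\left(j-\frac{1}{2}(k-2)\right)$ and each of the form $M\Delta^{-1}N$. Conjugating a single block by $T_D(q)$ through (\ref{tdMN}) splits $T_D(q)\cdot M\Delta^{-1}N\cdot T_D(q)^{-1}$ into a product term $\widetilde{M}\Delta^{-1}\widetilde{N}$ and a correction term $T_D(q)(M\Delta^{-1}N)(q)\Delta^{-1}\Lambda(q^{-1})$, whose right factor $\Delta^{-1}\Lambda(q^{-1})$ is common to every block.

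The correction terms resum at once: by linearity the weighted sum of the factors $T_D(q)(M\Delta^{-1}N)(q)$ is just $T_D(q)(X_{k-1}^{(1)})(q)$, so their total is $T_D(q)(X_{k-1}^{(1)})(q)\Delta^{-1}\Lambda(q^{-1})$. For the product terms I would use the new pair $\widetilde{q}=T_D(q)L(q)$, $\widetilde{r}=\Lambda(q^{-1})$ together with the power-shift identities (\ref{tdqa}) and (\ref{tdra}): the former lowers $\widetilde{M}=T_D(q)L^{k-2-j}(q)$ to $\widetilde{L}^{k-3-j}(\widetilde{q})$, while the latter raises $\widetilde{N}=T_D(q)^{*-1}(L^*)^j(r)$ to $(\widetilde{L}^*)^{j+1}(\widetilde{r})$. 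Since $T_D(q)(q)=0$, the $j=k-2$ block contributes nothing on the product side, so the surviving product sum runs only over $j=0,\dots,k-3$.

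Re-indexing the surviving sum by $p=j+1$ and writing the shifted weight as $\left(p-\frac{1}{2}(k-2)\right)-1$ splits it into the full weighted sum $\widetilde{X}_{k-1}^{(1)}$ (minus its $p=0$ term) and minus the uniform-weight sum, which by (\ref{laxminus}) equals $(\widetilde{L}^{k-1})_-$ (again minus its $p=0$ term). The two leftover $p=0$ terms combine with coefficient $\frac{1}{2}(k-2)+1=\frac{k}{2}$ into $\frac{k}{2}\widetilde{L}^{k-2}(\widetilde{q})\Delta^{-1}\widetilde{r}$; a final application of (\ref{tdqa}) rewrites $\widetilde{L}^{k-2}(\widetilde{q})=T_D(q)L^{k-1}(q)$, so this leftover piece is exactly $T_D(q)\left(\frac{k}{2}L^{k-1}\right)(q)\Delta^{-1}\Lambda(q^{-1})$. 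Adding it to the correction contribution of the previous paragraph merges them into $T_D(q)\left(X_{k-1}^{(1)}+\frac{k}{2}L^{k-1}\right)(q)\Delta^{-1}\Lambda(q^{-1})$, while the remaining pieces give $\widetilde{X}_{k-1}^{(1)}-(\widetilde{L}^{k-1})_-$, which is precisely (\ref{transformtd}).

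The main obstacle is the index bookkeeping of the third paragraph: one must track that (\ref{tdqa}) and (\ref{tdra}) shift the two powers in opposite directions, that the $j=k-2$ block drops from the product side but survives on the correction side, and that the two boundary terms recombine with exactly the coefficient $\frac{k}{2}$. The signs are the delicate point, being reversed relative to the integral case of Lemma \ref{transformedx} (where the corresponding term enters as $+(\widetilde{L}^{k-1})_-$); here the uniform-weight sum must come out as $-(\widetilde{L}^{k-1})_-$, and a single misplaced shift would spoil either this sign or the value of the boundary coefficient.
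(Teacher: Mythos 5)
Your proposal is correct and follows essentially the same route as the paper's own proof: term-by-term conjugation via (\ref{tdMN}), resummation of the correction terms into $T_D(q)(X_{k-1}^{(1)})(q)\Delta^{-1}\Lambda(q^{-1})$, the opposite power shifts from (\ref{tdqa}) and (\ref{tdra}), the weight split $\left(p-\tfrac{1}{2}(k-2)\right)-1$, and the recombination of the boundary terms with coefficient $\tfrac{k-2}{2}+1=\tfrac{k}{2}$. If anything, your explicit tracking of the vanishing $j=k-2$ product block and of the summation ranges is tidier than the paper's version of the same computation.
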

\begin{proof}
According to Lemma \ref{gaugelemmatd} and (\ref{xk1}), then
\begin{eqnarray*}
&&\mbox{\hspace{-0.3cm}}T_D(q)X_{k-1}^{(1)} T_D(q)^{-1}\\
&=&\Lambda(q) \Delta q^{-1}\sum_{j=0}^{k-2}\left(j-\frac{k-2}{2}\right)L^{k-j-2}(q_i)\Delta^{-1}(L^*)^j(r_i)q\Delta^{-1}\Lambda(q^{-1})\\
 &\stackrel{(\ref{tdMN})}{==}&\sum_{j=0}^{k-2}\left(j-\frac{k-2}{2}\right) {T_D(q)(L^{k-2}(q_i)\Delta^{-1} r_i)(q)\Delta^{-1} \Lambda(q^{-1})
+\widetilde{L}^{k-j-3}(\widetilde{q}_i)\Delta^{-1}(\widetilde{L}^*)^{j+1}(\widetilde{r}_i)}\\
&=&\mbox{\hspace{-0.6cm}}\sum_{j=0}^{k-2}\left(j-\frac{k-2}{2}\right) T_D(q)(L^{k-2}(q_i)\Delta^{-1} r_i)(q)\Delta^{-1} \Lambda(q^{-1})
+\sum_{j=1}^{k-1}\left(j-\frac{k}{2}\right)\widetilde{L}^{k-j-2}(\widetilde{q}_i)\Delta^{-1}(\widetilde{L}^*)^{j}(\widetilde{r}_i)\\
&=&T_D(q)(X_{k-1}^{(1)})(q)\Delta^{-1} \Lambda(q^{-1})
+\sum_{j=1}^{k-1}\left(j-\frac{k-2}{2}\right)\widetilde{L}^{k-j-2}(\widetilde{q}_i)\Delta^{-1}(\widetilde{L}^*)^{j}(\widetilde{r}_i)\\
&&-\sum_{j=1}^{k-1}\widetilde{L}^{k-j-2}(\widetilde{q}_i)\Delta^{-1}(\widetilde{L}^*)^{j}(\widetilde{r}_i)\\
&=&\sum_{j=0}^{k-2}\left(j-\frac{k-2}{2}\right)\widetilde{L}^{k-j-2}(\widetilde{q}_i)\Delta^{-1}(\widetilde{L}^*)^{j}(\widetilde{r}_i)
-\sum_{j=0}^{k-2}\widetilde{L}^{k-j-2}(\widetilde{q}_i)\Delta^{-1}(\widetilde{L}^*)^{j}(\widetilde{r}_i)\\
&&+ T_D(q)(X_{k-1}^{(1)})(q)\Delta^{-1} \Lambda(q^{-1})+\frac{k}{2}T_D(q)(L^{k-2}(q_i)\Delta^{-1}(\widetilde{L}^*)^{0} r_i)(q)\Delta^{-1} \Lambda(q^{-1})\\
&=&\widetilde{X}_{k-1}^{(1)}-(\widetilde{L}^{k-1})_-
+\left(T_D(q)(X_{k-1}^{(1)}+\frac{k}{2}L^{k-1}(q_i))\right)(q)\Delta^{-1} \Lambda(q^{-1}).
\end{eqnarray*}
Here we use the relation of $T_D(q_a)(q_a)=0$.
\end{proof}

\begin{lemma}
\begin{equation}\label{paktd}
    \partial_k^*T_D(q)\cdot
    T^{-1}_D(q)=-\Lambda(q)\Delta{\left(q^{-1}((M_{\Delta}L^k)_++\frac{k}{2}L^{k-1}+X_{k-1}^{(1)})\right)(q)}\Delta^{-1} \Lambda(q^{-1}).
\end{equation}
\end{lemma}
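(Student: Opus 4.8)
The plan is to mirror the computation that produced (\ref{pakt}) for the integral type, but now starting from the explicit difference-type factorization $T_D(q)=\Lambda(q)\Delta q^{-1}$ together with $T_D(q)^{-1}=q\Delta^{-1}\Lambda(q^{-1})$. Since the bare operator $\Delta$ carries no dependence on the additional time, $\partial_k^*$ acts only on the functions $q$ and $q^{-1}$, and the sole analytic ingredient is the additional-symmetry action (\ref{addsymq}) on the eigenfunction. I abbreviate it as $\partial_k^*q=P(q)$ with $P:=(M_{\Delta}L^k)_++\frac{k}{2}L^{k-1}+X_{k-1}^{(1)}$; note that $P$ is precisely the operator appearing on the right-hand side of (\ref{paktd}), so the whole task reduces to showing that the dressing by $T_D(q)$ leaves $P(q)$ in the indicated place while everything else cancels.

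First I would differentiate $T_D(q)$ directly. Using $\partial_k^*(q^{-1})=-q^{-2}\partial_k^*q=-q^{-2}P(q)$ gives
\[
\partial_k^*T_D(q)=\Lambda(P(q))\,\Delta\, q^{-1}-\Lambda(q)\,\Delta\, q^{-2}P(q),
\]
where the symbols to the right of $\Delta$ are read as multiplication operators. Right-multiplying by $T_D(q)^{-1}=q\Delta^{-1}\Lambda(q^{-1})$ and using $\Delta q^{-1}\cdot q=\Delta$ and $\Delta\Delta^{-1}=I$ collapses the first product to the pure multiplication term $\Lambda\!\big(q^{-1}P(q)\big)$ and turns the second into $-\Lambda(q)\,\Delta\big(q^{-1}P(q)\big)\Delta^{-1}\Lambda(q^{-1})$, again with $q^{-1}P(q)$ a multiplication operator.

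The decisive step is to commute $\Delta$ past this multiplication operator through the Leibniz identity $\Delta f=\Lambda(f)\Delta+\Delta(f)$ already used in the proofs of Lemmas \ref{gaugelemma} and \ref{gaugelemmatd}. With $f=q^{-1}P(q)$ it splits the second term into the zeroth-order piece $-\Lambda(q)\Lambda(f)\Lambda(q^{-1})=-\Lambda\!\big(q^{-1}P(q)\big)$ plus $-\Lambda(q)\,\Delta\!\big(q^{-1}P(q)\big)\Delta^{-1}\Lambda(q^{-1})$, where now $\Delta(q^{-1}P(q))$ is the function obtained by applying $\Delta$. The two pure-multiplication contributions $+\Lambda(q^{-1}P(q))$ and $-\Lambda(q^{-1}P(q))$ cancel exactly, and what survives is $-\Lambda(q)\,\Delta\!\big(q^{-1}P(q)\big)\Delta^{-1}\Lambda(q^{-1})$, i.e.\ (\ref{paktd}).

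I expect the only real obstacle to be bookkeeping rather than any conceptual difficulty: one must maintain a scrupulous distinction between $\Delta(f)$ as a function and $\Delta\cdot f$ as an operator product, and verify that the two order-zero terms indeed cancel instead of reinforcing. As an independent check one can follow exactly the route taken for (\ref{pakt}) and instead write $\partial_k^*T_D(q)\cdot T_D(q)^{-1}=-T_D(q)\,\partial_k^*\big(T_D(q)^{-1}\big)$, differentiating $T_D(q)^{-1}=q\Delta^{-1}\Lambda(q^{-1})$; the same cancellation of the leading multiplication terms occurs and reproduces the sign and the final form, confirming (\ref{paktd}).
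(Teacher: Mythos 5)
Your computation is correct and is essentially the paper's own proof: the paper writes $\partial_k^*T_D(q)\cdot T_D(q)^{-1}=-T_D(q)\partial_k^*(T_D(q)^{-1})$ and differentiates the inverse, while you differentiate $T_D(q)$ directly, but both reduce to the same two ingredients — the action (\ref{addsymq}) on $q$ and the Leibniz identity $\Delta f=\Lambda(f)\Delta+\Delta(f)$ — and the same cancellation of the order-zero multiplication terms. Your closing remark already identifies the paper's variant as the equivalent alternative, so nothing further is needed.
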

\begin{proof}By (\ref{addsymq}),
\begin{eqnarray*}
 \partial_k^*T_D(q)\cdot T_D(q)^{-1}&=&-T_D(q)\partial_k^*(T_D(q)^{-1})\\
 &=&-\Lambda(q)\Delta q^{-1}\partial_k^{*}(q)\Delta^{-1}\Lambda(q^{-1})-\Lambda(q\partial_k^*(q^{-1}))\\
 &=&-\Lambda(q)\left(\Lambda(q^{-1}\partial_k^{*}(q)) \Delta+\Delta(q^{-1}\partial_k^{*}(q))\right)\Delta^{-1} \Lambda(q^{-1})-\Lambda(q\partial_k^*(q^{-1}))\\
 &=& -\Lambda(q)\Delta(q^{-1}\partial_k^{*}(q))\Delta^{-1} \Lambda(q^{-1})\\
  &=&-\Lambda(q)\Delta{\left(q^{-1}((M_{\Delta}L^k)_++\frac{k}{2}L^{k-1}+X_{k-1}^{(1)})\right)(q)}\Delta^{-1} \Lambda(q^{-1}).
\end{eqnarray*}
\end{proof}

\begin{theorem}\label{addgautd}
The additional symmetry flows (\ref{addsym}) for the cdKP
hierarchy ((\ref{cdkplax}) for $m=1,l=1$) commute with the difference type gauge
transformation $T_D(q)$ preserving the form of cdKP , up to shifting of
(\ref{addsym}) by ordinary time flows, that is,
\begin{equation}\label{cdkpaddgaugetd}
    \partial_k^*\widetilde{L}=\left[-(\widetilde{M}_{\Delta}\widetilde{L}^k)_-+\widetilde{X}_{k-1}^{(1)},\widetilde{L}\right]+\frac{\partial \widetilde{L}}{\partial
    t_{k-1}}.
\end{equation}
\end{theorem}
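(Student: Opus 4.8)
The plan is to mirror exactly the structure of the proof of Theorem \ref{addgau} (the integral-type case), since Theorem \ref{addgautd} is its difference-type analogue and the paper has already supplied the two auxiliary lemmas (\ref{transformtd}) and (\ref{paktd}) that play the roles that (\ref{transformedx}) and (\ref{pakt}) played before.

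First I would expand $\partial_k^*\widetilde{L}$ using the Leibniz rule on $\widetilde{L}=T_D(q)LT_D(q)^{-1}$. The three-term product rule gives
\begin{eqnarray*}
\partial_k^*\widetilde{L}&=& \partial_k^* T_D(q)\cdot LT_D(q)^{-1}+T_D(q)\,\partial_k^*L\cdot T_D(q)^{-1}\\
&&-\,T_D(q)LT_D(q)^{-1}\cdot\partial_k^*T_D(q)\cdot T_D(q)^{-1},
\end{eqnarray*}
and substituting the additional-symmetry flow (\ref{addsym}) for $\partial_k^*L$ and regrouping lets me write everything as a single commutator with $\widetilde{L}$, exactly as in (\ref{proof1}):
\begin{equation*}
\partial_k^*\widetilde{L}=\left[T_D(q)\left(-(M_{\Delta}L^k)_-+X_{k-1}^{(1)}\right)T_D(q)^{-1}+\partial_k^*T_D(q)\cdot T_D(q)^{-1},\ \widetilde{L}\right].
\end{equation*}

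Next I would compute the operator inside the commutator. The term $T_D(q)X_{k-1}^{(1)}T_D(q)^{-1}$ is handed to me by Lemma (\ref{transformtd}), and $\partial_k^*T_D(q)\cdot T_D(q)^{-1}$ by Lemma (\ref{paktd}). Adding these together with $-T_D(q)(M_{\Delta}L^k)_-T_D(q)^{-1}$, the plan is to check that the various $q\Delta^{-1}\Lambda(q^{-1})$-tail contributions cancel pairwise: the $T_D(q)(X_{k-1}^{(1)})(q)\Delta^{-1}\Lambda(q^{-1})$ and $\tfrac{k}{2}T_D(q)(L^{k-1}(q))(q)\Delta^{-1}\Lambda(q^{-1})$ pieces from (\ref{transformtd}) should cancel against the correspondingly-grouped terms in (\ref{paktd}), leaving behind only $\widetilde{X}_{k-1}^{(1)}-(\widetilde{L}^{k-1})_-$ plus the dressed negative part. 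The one genuinely nontrivial manipulation is combining $-T_D(q)(M_{\Delta}L^k)_-T_D(q)^{-1}$ with the leftover $(M_{\Delta}L^k)_+$-tail from (\ref{paktd}) into the single object $-(\widetilde{M}_{\Delta}\widetilde{L}^k)_-$; this is precisely where the difference-type operator identity (\ref{Td}) of Lemma \ref{fubu} is invoked, in direct parallel with how (\ref{Ti}) was used in the boxed computation at the end of the proof of Theorem \ref{addgau}. The net result should be
\begin{equation*}
T_D(q)\left(-(M_{\Delta}L^k)_-+X_{k-1}^{(1)}\right)T_D(q)^{-1}+\partial_k^*T_D(q)\cdot T_D(q)^{-1}=-(\widetilde{M}_{\Delta}\widetilde{L}^k)_-+\widetilde{X}_{k-1}^{(1)}-(\widetilde{L}^{k-1})_-.
\end{equation*}

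Finally I would substitute this back into the commutator and identify $-[(\widetilde{L}^{k-1})_-,\widetilde{L}]$ with a genuine time flow. Here I expect the sign to be the crucial point distinguishing this theorem from Theorem \ref{addgau}: in the integral case the extra piece was $+(\widetilde{L}^{k-1})_-$ yielding $-\partial\widetilde{L}/\partial t_{k-1}$, whereas here the piece carries the opposite sign, so that $-[(\widetilde{L}^{k-1})_-,\widetilde{L}]=+\partial\widetilde{L}/\partial t_{k-1}$ via the transformed flow equation $\partial_{t_{k-1}}\widetilde{L}=[(\widetilde{L}^{k-1})_+,\widetilde{L}]=-[(\widetilde{L}^{k-1})_-,\widetilde{L}]$ (using $[\widetilde{L}^{k-1},\widetilde{L}]=0$). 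This produces exactly (\ref{cdkpaddgaugetd}) with the $+\partial\widetilde{L}/\partial t_{k-1}$ shift. The main obstacle I anticipate is bookkeeping: correctly tracking every $\tfrac{k}{2}$ factor and every $q\Delta^{-1}\Lambda(q^{-1})$ tail through the cancellation, and getting the overall sign right in the (\ref{Td})-based simplification of the $(M_{\Delta}L^k)$ term — the algebra is routine but sign-sensitive, and a single misplaced conjugation would flip the final shift.
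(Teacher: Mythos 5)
Your proposal follows the paper's own proof essentially verbatim: the same three-term Leibniz expansion into a single commutator, the same invocation of (\ref{transformtd}), (\ref{paktd}) and the identity (\ref{Td}) of Lemma \ref{fubu} to reduce the inner operator to $-(\widetilde{M}_{\Delta}\widetilde{L}^k)_-+\widetilde{X}_{k-1}^{(1)}-(\widetilde{L}^{k-1})_-$, and the same final identification $-[(\widetilde{L}^{k-1})_-,\widetilde{L}]=+\partial\widetilde{L}/\partial t_{k-1}$. The sign analysis distinguishing this case from Theorem \ref{addgau} is also correct, so the plan is sound and matches the paper's argument.
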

\begin{proof}Firstly, by (\ref{addsym}),
\begin{eqnarray}
\partial_k^* \widetilde{L}&=&\partial_k^*\left(T_D(q)LT^{-1}_D(q)\right)\nonumber\\
&=& \partial_k^* T_D(q)\cdot
LT_D(q)^{-1}+T_D(q)\partial_k^*L\cdot
T^{-1}_D(q)-T_D(q)LT^{-1}_D(q)\cdot\partial_k^*T_D(q)\cdot T^{-1}_D(q)\nonumber\\
&=& \partial_k^* T_D(q)\cdot
T^{-1}_D(q)\widetilde{L}+T_D(q)[-(M_{\Delta}L^k)_-+X_{k-1}^{(1)},L]T^{-1}_D(q)-\widetilde{L} T^{-1}_D(q)\cdot\partial_k^*T_D(q)\nonumber\\
&=&
\left[T_D(q)\left(-(M_{\Delta}L^k)_-+X_{k-1}^{(1)}\right)T^{-1}_D(q)+\partial_k^*T_D(q)\cdot
T_D(q)^{-1}, \widetilde{L}\right].\label{prooftd1}
\end{eqnarray}

Then with the help of (\ref{transformtd}), (\ref{paktd}), and the
following useful formula (3.9) in \cite{lmh20131},
we have
\begin{eqnarray}
&&T_D(q)\left(-(M_{\Delta}L^k)_-+X_{k-1}^{(1)}\right)T^{-1}_D(q)+\partial_k^*T_D(q)\cdot
T^{-1}_D(q)\nonumber\\
&=&-T_D(q)(M_{\Delta}L^k)_-T^{-1}_D(q)+T_D(q)X_{k-1}^{(1)}T^{-1}_D(q)+\partial_k^*T_D(q)\cdot
T^{-1}_D(q) \nonumber\\
&=&
-T_D(q)(M_{\Delta}L^k)_-T^{-1}_D(q)+\widetilde{X}_{k-1}^{(1)}+\left(T_D(q)(X_{k-1}^{(1)}+\frac{k}{2}L^{k-1}(q_i))\right)(q)\Delta^{-1} \Lambda(q^{-1})
    \nonumber\\
&&-(\widetilde{L}^{k-1})_-
- T_D(q)\left\{(M_{\Delta}L^k)_++\frac{k}{2}L^{k-1}+X_{k-1}^{(1)}\right\}(T^{-1}_D(q))\nonumber\\
&=&-(\widetilde{M}_{\Delta}\widetilde{L}^k)_-+\widetilde{X}_{k-1}^{(1)}-(\widetilde{L}^{k-1})_-,\label{prooftd2}
\end{eqnarray}
where the following relation \cite{lmh20134} is used,
\begin{equation}\label{laxminus}
    (\widetilde{L}^{k-1})_-=\sum_{j=0}^{k-2}\widetilde{L}^{k-j-2}(\widetilde{q})\Delta^{-1}\widetilde{L}^j(\widetilde{r}).
\end{equation}
In the above process,
\begin{eqnarray*}
&&-T_D(q)(M_{\Delta}L^k)_-T^{-1}_D(q)- T_D(q)(M_{\Delta}L^k)_+(T^{-1}_D(q))\\
&=&-T_D(q)(M_{\Delta}L^k)T^{-1}_D(q)+T_D(q)(M_{\Delta}L^k)_+T^{-1}_D(q)- T_D(q)(M_{\Delta}L^k)_+(T^{-1}_D(q))\\
&=&-(\widetilde{M}_{\Delta}\widetilde{L}^k)+(\widetilde{M}_{\Delta}\widetilde{L}^k)_+\\
&=&-(\widetilde{M}_{\Delta}\widetilde{L}^k)_-,
\end{eqnarray*}
 which can be got by means of  the identities (\ref{Td}) of Lemma.\ref{fubu}.

At last, the substituting (\ref{prooftd2}) into (\ref{prooftd1}) gives
rise to
\begin{eqnarray*}
\partial_k^* \widetilde{L}
&=&\left[-(\widetilde{M_{\Delta}}\widetilde{L}^k)_-+\widetilde{X}_{k-1}^{(1)}-(\widetilde{L}^{k-1})_-, \widetilde{L}\right]\nonumber\\
&=&\left[-(\widetilde{M_{\Delta}}\widetilde{L}^k)_-+\widetilde{X}_{k-1}^{(1)},\widetilde{L}\right]
-\left[(\widetilde{L}^{k-1})_-,\widetilde{L}\right] \nonumber\\
&=&\left[-(\widetilde{M_{\Delta}}\widetilde{L}^k)_-+\widetilde{X}_{k-1}^{(1)}, \widetilde{L}\right]+\frac{\partial \widetilde{L}}{\partial
    t_{k-1}}.
\end{eqnarray*}
\end{proof}

\noindent{\bf Remark}: when $m>1$, (\ref{cdkpaddgauge}) will not
hold. In fact, when $m>1$, (\ref{transformedx}) will become into
\begin{eqnarray}
T_I(r_a)X_{k-1}^{(1)}T_I(r_a)^{-1}&=&\widetilde{X}_{k-1}^{(1)}+\sum_{j=0}^{k-2}\widetilde{L}^{k-j-2}(\widetilde{q_a})\Delta^{-1}\widetilde{L}^j(\widetilde{r_a})\nonumber\\
    &&+\Lambda^{-1}(r_a^{-1})\Delta^{-1}\left\{T_I(r_a)^{*-1}(X_{k-1}^{(1)}-\frac{k}{2}L^{k-1})^*(r_a)\right\},
\end{eqnarray}
where $r$ is one of the adjoint eigenfunctions in (\ref{cdkplax}).
We can see that
$\sum_{j=0}^{k-2}\widetilde{L}^{k-j-2}(\widetilde{q_a})\Delta^{-1}\widetilde{L}^j(\widetilde{r_a})$
can not be written as $(\widetilde{L}^{k-1})_-$, since when $m>1$
\cite{orlov1996},
\begin{equation*}
    (\widetilde{L}^{k-1})_-=\sum_{a=0}^{m}\sum_{j=0}^{k-2}\widetilde{L}^{k-j-2}(\widetilde{q_a})\Delta^{-1}\widetilde{L}^j(\widetilde{r_a}).
\end{equation*}
Thus from the proof of (\ref{cdkpaddgauge}), the term of
$\partial_{t_{k-1}}\widetilde{L}$ in (\ref{cdkpaddgauge}) can not be
derived.

The same as the integral type gauge transformation,  the difference type gauge transformation (\ref{cdkpaddgaugetd}) also is not satisfied for $m>1$.

%%%%%%%%%%%%%%%%%%%%%%%%%%%%%%%%%%%%%%%%%%%%%%%%%%%%%%
\section{Conclusions and Discussions}
%%%%%%%%%%%%%%%%%%%%%%%%%%%%%%%%%%%%%%%%%%%%%%%%%%%%%%%
After some technique identities of two types gauge transformations of the cdKP hierarchy,
the interplay of the integral type gauge transformation $T_I$ and the difference type gauge transformation $T_D$ with
the additional symmetry at the instance of  the cdKP
hierarchy are gotten in Theorem \ref{addgau}, Theorem \ref{addgautd} (see
(\ref{cdkpaddgauge}, \ref{cdkpaddgaugetd})),  which preserving the form of the additional symmetry of the cdKP hierarchy, up to shifting of
  the corresponding additional flows by ordinary time flows. But it is shifting for different directions for the integral type gauge transformation and the
  difference type gauge transformation of the cdKP hierarchy.
 It reflects one of the intrinsic features for the cdKP hierarchy. These results
provide a mathematical background from the point of view of integrable systems of
the potential applications in physics for the additional symmetry flows of the
cdKP hierarchy.

%%%%%%%%%%%%%%%%%%%%%%%%%%%%%%%%%%%%%%%%%%%%%%%%%%%%%%%%%%%%%%
{\bf {Acknowledgements:}}
  {\small This work is supported by
the National Natural Science Foundation of China  under Grant Nos.11271210 and 11301526, K. C. Wong Magna Fund in
Ningbo University,  Natural Science Foundation of Ningbo under Grant No. 2014A610029 and Science Fund of Ningbo
University (No.XYL14028). One of the authors (MH) is  supported by Erasmus Mundus Action 2 EXPERTS III
and would like to thank Prof. Antoine Van Proeyen for many helps.}

\end{document}